\newtheorem{theorem}{Theorem}[section]
\newtheorem{lemma}[theorem]{Lemma}
\newtheorem{claim}[theorem]{Claim}
\newtheorem{open}[theorem]{Open Problem}
\newtheorem{definition}[theorem]{Definition}
\newtheorem{remark}[theorem]{Remark}
\newcommand{\zo}{\{0,1\}}
\renewcommand{\S}{\mathcal{A}}
\newcommand\nk{\binom{[n]}{k}}
\newcommand{\Test}{\mathcal{T}}
\renewcommand{\D}{V}
\newcommand{\abs}[1]{\left\lvert #1 \right\rvert}
\renewcommand{\G}{V}
\renewcommand{\DP}{\textsf{DP}}
\newcommand{\DPV}{\textsf{DP}_V}
\newcommand{\maj}{\textsf{maj}}
\newcommand{\sett}[2]{\{#1 | #2\}}
\renewcommand{\d}{\textsf{dec}}
\newcommand{\vecOne}[1]{\mathbf{1}_{#1}}
\newcommand{\ip}[2]{\langle{#1, #2}\rangle}
\newcommand{\norm}[1]{\left\lVert#1\right\rVert}
\title{\textbf{Towards a General Direct Product Testing  Theorem }}
\date{}
\author{ \and Karthik C.\ S.}
\author{Elazar Goldenberg\vspace{0.1cm}\\
 The Academic College of Tel Aviv-Yaffo\vspace{0.1cm} \\
\texttt{elazargo@mta.ac.il} \vspace{0.1cm}
\and 
Karthik C.\ S.\footnote{This work was partially supported by Irit Dinur's  ERC-CoG grant 772839.   } \vspace{0.1cm}\\
 Weizmann Institute of Science\vspace{0.1cm}\\
   \texttt{karthik.srikanta@weizmann.ac.il}
}
\begin{document}
\maketitle  

\begin{abstract}
The Direct Product encoding of a string $a\in \{0,1\}^n$ on an underlying domain $V\subseteq \nk$, is a function $\DP_V(a)$ which gets as input a set $S\in \D $ and outputs $a$ restricted to $S$.  In the Direct Product Testing Problem, we are given a function $F:\D\to \{0,1\}^k$, and our goal is
to test whether $F$ is close to a direct product encoding,
i.e., whether there exists some $a\in \{0,1\}^n$  such that on most sets $S$, we have
$F(S)=\DP_V(a)(S)$.
A natural test  is as follows: select a pair $(S,S')\in \D$ according to some underlying distribution over $V\times V$, query $F$ on this pair, and check  for consistency on their intersection. 
Note that the above distribution may be viewed as  a  weighted graph over the vertex set $V$ and is referred to as a test graph.

The testability of direct products was studied over various domains and test graphs: Dinur and Steurer (CCC'14) analyzed it when $V$ equals the $k$-th slice of the Boolean hypercube and the test graph is a member of the Johnson graph family.  Dinur and Kaufman (FOCS`17) analyzed it for the case where $V$ is the set of faces of a Ramanujan complex, where in this case $V=O_k(n)$. In this paper, we study the testability of direct products in a general setting, addressing the question: what properties of the domain and the test graph allow one to prove a direct product testing theorem?

Towards this goal we introduce the notion of coordinate expansion of a test graph. Roughly speaking a test graph is a coordinate expander if it has global and local expansion, and has certain nice intersection properties on sampling. We show that whenever the test graph has coordinate expansion then it admits a direct product testing theorem.  Additionally, for every $k$ and $n$ we provide a direct product domain $V\subseteq \binom{n}{k}$ of size  $n$, called the Sliding Window domain for which we prove direct product testability.
\end{abstract}
\clearpage

\section{Introduction}\label{sec:intro}
The direct product encoding of a function is a way to aggregate multiple values of the 
input function using a single query.
Justifying the vague intuition that it is much harder to compute multiple values of a 
function rather then a single value of it, the direct product encoding has been successfully 
used in several contexts of hardness amplification. The hardness can either measure the fraction of
inputs on which every reasonable-time algorithm fails to compute the input function, or the fraction 
of unsatisfied assignments of a given CNF-formula or the communication complexity of the function.

In most of the PCP constructions an assignment to the given input is broken into many tiny pieces. Each small piece is encoded individually and then one should be able to test whether these tiny pieces could be stitched together into a global assignment. This testability task is referred to as an agreement test, and instantiations of it include low degree tests such as the plane vs.\ plane \cite{RS97}, the line vs.\ line test \cite{AS97} and the cube vs.\ cube test \cite{BDN17}, and the direct product test used in \cite{DR06}.

More concretely, we associate the direct product encoding of strings of size $n$, with some underlying domain\footnote{For the ease of presentation, we only consider domains which are a subset of $\binom{[n]}{k}$ in this section. However, in the rest of the paper we consider $\D$ which is a collection of subsets of $[n]$, and all our results are proved for this more general case.} 
$\D$ which is a collection of subsets of $[n]$ of cardinality $k$. Given a string $a\in \zo^n$ its direct product encoding on the domain $\D$, denoted by $\DP_V(a)$, is 
defined as follows: For every set $S\in \D$ we define $\DP_{\D}(a)(S)=a|_S$ (where $a|_S$ is the restriction of 
$a$ to the coordinates in $S$). In this paper we study the testability of this encoding, namely: Given 
$F:\D \to \zo^k$ we want to decide whether $F$ agrees with some $\DP_{\D}(a)$ on most sets $S$ while querying $F$ only on a few locations, specifically two. In other words, we focus on two-query tests in the paper where we pick a pair of subsets (both in the domain) according to some fixed distribution and then check if the two subsets agree on their intersection. We say that a domain $V$ admits a direct product testing theorem if there exists a  two-query test $\Test $ satisfying the following: For every $\varepsilon\ge 0$ and $F:\D \to \zo^k$ if $\Test$ accepts $F$ with probability $1-\varepsilon$, then we have $F(S)=
\DP_{\D}(a)(S)$ for some $a\in \zo^n$ on $1-O(\varepsilon)$-fraction of the sets $S$ in $V$, where the constant behind the $O$ notation is independent of $\abs{V}$ and $k$.

This question was studied under various domains. Dinur and Steurer~\cite{DS14} analyzed a two-query test under the domain $\D=\nk$. Recently, Dinur and Kaufman~\cite{DK17} studied this question in a much shrunken domain, which is obtained by considering the set of the faces of a high dimensional expander.  However, both of these proofs are tailored to the structure of their own domain and cannot  be (trivially) generalized  to other domains. It is natural to ask whether a more generalized argument can be applied covering both of these domains, and on which domains it may be applied.   
The main question we are investigating is as follows:

\vspace{3mm}
\begin{center}
\begin{minipage}{.7\textwidth}
\emph{Which domains admit a two-query direct product testing theorem?}
\end{minipage}
\end{center}
\vspace{3mm}

Let us elaborate more about the previous proofs. The proofs given by ~\cite{DS14} and~\cite{DK17} first analyze the testability in the high error regime, i.e. when the acceptance probability is slightly bounded away from $0$. They show that any function that passes the test with non-negligible probability $\varepsilon$ must agree with some legal codeword $\DP_{\D}(a)$ on $\Omega(\varepsilon)$ fraction of sets. Then they analyze the test in the low error regime, i.e. when the acceptance probability of the test is close to $1$. Finally they stitch local  tiny agreements into a single codeword and show that the agreement is almost everywhere. 

We would like to establish a direct product testing theorem using a more straightforward approach: we decode a string from the input function $F$ using the majority operator and then show that if the test passes with high probability then $F$ is close to the direct product encoding of the decoded string. More precisely, given the input function $F$, we define a string $a\in \zo^n$ as follows: for every coordinate $i\in [n]$ we set $a_i$ to be the majority value of $F(S)_i$, where the majority is taken over the sets that contain $i$. Next we show that if $F$ passes the test with probability $1-\varepsilon$ then $F$ must be  $1-O(\varepsilon)$-close to $\DP_V(a)$. We remark that Dinur and Reingold~\cite{DR06} indeed followed this proof strategy, however, their proof admits only a relaxed notion of closeness between the input function and the direct product encoding of the decoded string (namely, that on most sets $S$, $F(S)$ and $\DP_V(a)(S)$ agree \textit{only} on most of the coordinates in $S$).

Observe that any two-query test on a domain $V$ gives rise to a weighted graph whose vertex set is $V$ and the weight we assign for each pair $(S,S')$ is the probability of this pair being picked by the test\footnote{In this paper we analyze test graphs which are undirected.}. We refer to this graph as the test graph. We say that a test graph yields a tester for the domain $V$, if for every $\varepsilon\ge 0$ and every function $F:V\to \zo^k$ the following holds: if the test accepts $F$ with probability $1-\varepsilon$, then $F$ must be $1-O(\varepsilon)$-close to some $\DP_V(a)$. Here the test corresponds to picking an edge $(S,S')$ at random  (according to the distribution of weights on the edges) and accepting if and only if $F(S)|_{S\cap S'}=F(S')|_{S\cap S'}$.	

Another proof insight that we desire is the explicit use of the properties of the underlying test graph. For example, one property that the test graph must satisfy to be a tester is that for most edges $(S,S')$ the intersection between $S$ and $S'$ is linear in $k$. Assume not, then we consider the following construction of $F$: We start from $F=\DP_V(a)$ for some $a\in \zo^n$ and then for each $S\in V$ we reset  the value of $F(S)_i$ for some random $i\in S$. Then for most sets $(S,S')$ with small intersection the test accepts but $F$ is far from any direct product codeword.
Another property that the test graph must have is some notion of expansion. Summing up, our more refined question is as follows:

\vspace{3mm}
\begin{center}
\begin{minipage}{.73\textwidth}
\emph{What properties of the test graph yields a tester for its underlying domain? }
\end{minipage}
\end{center}
\vspace{3mm}

\subsection{Our Results}

\begin{sloppypar}Our conceptual contributions in this paper are two-fold. First, we introduce a notion called \emph{coordinate expansion} which captures the properties of direct product testable domains. Second, we introduce the sliding window domain which is of size exactly \emph{equal} to the universe and is direct product testable. Our main technical contribution is showing that domains having coordinate expansion with certain parameters admit a direct product theorem. \end{sloppypar}

\subsubsection{A General Direct Product Theorem}

We introduce below the notion of coordinate expansion. Informally, a coordinate expander has both global and local expansion properties, and has good intersection properties. 

\begin{definition}[$(\lambda,\rho)$-Coordinate   Expander]\label{def:CE}
	Let $G=(V,E)$ be a test graph, where $V\subseteq \nk$. For $i\in [n]$ let $V_i=\sett{S\in V}{i\in S}$ and $G_i$ be the subgraph of $G$ induced by the vertices in $V_i$. 		The graph $G$ is called $(\lambda,\rho)$-coordinate   expander if:
	\begin{enumerate}
		\item $\lambda(G)<\lambda$ (where $\lambda(G)=\max \{\abs{\lambda_2(A_G)}, \abs{\lambda_{\abs{V}}(A_G)}\}$ and $A_G$ is the normalized adjacency matrix of $G$).
		\item For every $i\in [n]$ we have that $\lambda(G_i)<\lambda$ and for each $S\in V_i$ the probability that a uniformly  random neighbor $S'$ of $S$ is in $V_i$ is at least $\rho$.
		\item For every subset $S$ and $T\subseteq S$, satisfying $\abs{T} \ge 2 /\rho$ , the probability that for a uniformly  random neighbor $S'$ of $S$ we have $\abs{S'\cap T} \le  \rho\abs{T}/2$ is upper bounded bounded by $\lambda$ .
	\end{enumerate} 
\end{definition}

Notice that condition 1 implies that the test graph must be a good expander (in the traditional sense). Moreover, condition 2, implies that on certain local subsets (i.e., subsets containing a common coordinate) of vertices, the induced subgraph must be expanding as well.  Finally, condition 3 implies that the neighbors of every subset $S$ samples well every subset $T$ of $S$. 

Observe that condition 2 is necessary for the test graph to be a direct product tester. To see this, consider a test graph that does not satisfy this property, namely, there exists a coordinate $i\in [n]$ for which: there exits a set $B_i\subset V_i$ such that $\Pr_{S'\in V_i}[S'\notin B_i| S\in B_i]=o(1)$. Then, we show that the test graph does not yield a tester. Indeed, consider the following construction of $F$: we first choose $F=\DPV(a)$ for some $a\in \zo^n$. Then for every $S\in B_i$ we change the value of $F(S)_i$ to  $1-a_i$. Clearly, the distance of $F$ from a direct product encoding equals $\delta:=\abs{B_i}/\abs{V}$. However, the rejection probability equals: 
$$2\cdot \Pr_{S'\sim S}[S\in B_i\text{ and } S'\in V_i \setminus B_i]\le 2\cdot \Pr[S\in B_i]\cdot \Pr[S'\in V_i] \cdot \Pr_{S'\in V_i}[S'\notin B_i| S\in B_i]=o(1)\cdot \delta.$$

Then, we show our main technical result that coordinate expansion implies direct product testing (for a certain range of parameters).

\begin{theorem}\label{thm:main}
	Let $\rho\ge 1/2$ and $\lambda\le 1/33$. Let $G=(V,E)$ be a test graph, $V\subseteq \nk $, let $\varepsilon\ge 0$, and $F:V\to \{0,1\}^k$. 	
	Let $G$ be a $(\lambda,\rho)$-coordinate   expander. If $F$ passes the test implied by the test graph $G$ with probability $1-\varepsilon$ then $F$ is $1-O(\varepsilon)$-close to $\DP_V(a)$ for some $a\in\{0,1\}^n$.
\end{theorem}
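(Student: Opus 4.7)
Following the strategy advocated in the introduction, I would majority-decode the candidate $a \in \zo^n$ by setting $a_i$ equal to the majority value of $F(S)_i$ over $S \in V_i$, so that the coordinate-wise bad sets $B_i := \{S \in V_i : F(S)_i \neq a_i\}$ automatically satisfy $|B_i|/|V_i| \leq 1/2$. Writing $E(S) = \{i \in S : F(S)_i \neq a_i\}$ and $B = \{S : E(S) \neq \emptyset\}$, the theorem reduces to showing $\delta := |B|/|V| = O(\varepsilon)$.

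The heart of the argument is a local propagation lemma: for every $S \in B$,
\[
\Pr_{S' \sim S}[(S,S') \text{ rejects}] + \Pr_{S' \sim S}[S' \in B] \;\geq\; \rho.
\]
To prove it, pick any $i \in E(S)$; by condition~2 of coordinate expansion we have $\Pr_{S' \sim S}[i \in S'] \geq \rho$, and whenever $i \in S'$ the pair $(S,S')$ either rejects on coordinate $i$ (if $F(S')_i = a_i$, which disagrees with $F(S)_i$) or certifies $S' \in B_i \subseteq B$ (if $F(S')_i \neq a_i$). When $|E(S)| \geq 2/\rho$, I would strengthen the lower bound from $\rho$ to $1-\lambda$ by applying condition~3 with $T = E(S)$, which guarantees that a $(1-\lambda)$-fraction of neighbors intersect $E(S)$ nontrivially.

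Multiplying by $\mathbbm{1}_B(S)$ and averaging over uniform $S \in V$ converts the local lemma into the global estimate
\[
\rho\delta \;\leq\; \Pr_E[S \in B,\, \text{rej}] + \Pr_E[S \in B,\, S' \in B] \;\leq\; \varepsilon + \Pr_E[S, S' \in B],
\]
and the expander mixing lemma applied via condition~1 bounds $\Pr_E[S, S' \in B] \leq \delta^2 + \lambda\delta(1-\delta)$. Combining gives the quadratic inequality
\[
\delta \cdot \bigl((\rho-\lambda) - \delta\bigr) \;\leq\; \varepsilon,
\]
which in the regime $\delta \leq (\rho-\lambda)/2$ closes out as $\delta \leq 2\varepsilon/(\rho-\lambda) = O(\varepsilon)$.

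The main obstacle, and the reason for the numerical requirements $\rho \geq 1/2$ and $\lambda \leq 1/33$, is ruling out the spurious large-$\delta$ branch of the quadratic, where $\delta \geq (\rho-\lambda) - O(\varepsilon)$. A naive starting bound $\delta \leq \sum_i |B_i|/|V|$ loses a factor of $k$, since a single rejecting edge may contribute to as many as $k$ coordinate-rejections. My plan to resolve this is to split $B$ into $B_L = \{S : |E(S)| \geq 2/\rho\}$ and $B_S = B \setminus B_L$ (in which $|E(S)|$ is at most a small constant for $\rho \geq 1/2$), to control $B_L$ directly via the strengthened $(1-\lambda)$-propagation from condition~3, and to charge each $S \in B_S$ via one of its constantly many error coordinates to rejections inside the corresponding local expander $G_i$ using condition~2. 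This refined estimate should force $\delta$ below $(\rho-\lambda)/2$ whenever $\varepsilon$ is below a constant threshold depending only on $\rho$ and $\lambda$, at which point the quadratic yields the $O(\varepsilon)$ bound; for $\varepsilon$ above the threshold the conclusion of the theorem becomes trivial.
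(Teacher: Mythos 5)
Your argument is sound up to the quadratic $\delta(\rho - \lambda - \delta) \le \varepsilon$, and you correctly diagnose that its spurious large-$\delta$ branch is the obstacle. But the proposed fix does not remove it. For $S\in B_L$, the $(1-\lambda)$-propagation from condition~3 still only says that $S'$ either rejects or lies somewhere in $B$; you still need to control $\Pr[S' \in B \mid S \in B_L]$, which expander mixing bounds by roughly $\delta + \lambda$. Substituting back reproduces a quadratic with coefficient~$1$ in front of $\delta$, so the large-$\delta$ branch survives. There is also a threshold mismatch: to force a rejection against $S'\in B_S$ via condition~3 you need $\rho|E(S)|/2 > |E(S')|$, i.e., $|E(S)| \ge 4/\rho^2$, so a cut at $2/\rho$ is too low.

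The missing idea --- the heart of the paper's proof --- is to split $B$ by \emph{percentile} of conflict count, not by an absolute threshold. The paper sorts $B$ by $|B(S)|$ and works with $B_{\ge 1-c}$ (the top $c$-fraction) and $B_{>1/2}$ (the top half). Because $B_{>1/2}$ is by construction exactly half of $B$, expander mixing gives $\Pr[S'\in B_{>1/2}\mid S\in B_{\ge 1-c}]\le \beta/2 + O(\lambda)$, with the crucial factor $1/2$; then $1-\beta/2-O(\lambda)$ is bounded away from zero for \emph{every} $\beta\in[0,1]$, and no spurious branch appears. The rejection itself comes from the observation that if $S'\in B_{\le 1/2}\cup C$ has fewer conflicting coordinates than $|S'\cap B(S)|$ (which is forced via condition~3 when $m_{1-c}\gg m_{1/2}$), then some coordinate of $S'\cap B(S)$ agrees with the majority on $S'$ but not on $S$. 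When the conflict counts in $B$ do not vary much, the paper instead runs a single-coordinate auxiliary test and applies the furthermore clause of the expander-mixing claim to a distribution that is non-uniform only by a factor $O(1/\rho^2)$ --- this is the rigorous counterpart of your $B_S$ charging plan. Your proposal blends both cases but, by replacing the percentile cut with an absolute one, forfeits the factor-$1/2$ improvement on which the whole argument turns.
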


The overview of the above proof is given in Section~\ref{sec:tech}. Also, as an application of
the above theorem, we show\footnote{The claim as written here is slightly inaccurate. Please refer to Appendix~\ref{sec:apphalf} for a precise statement.} a direct product theorem for the test graph isomorphic to the Johnson graph $J(n,k)$ when $k$ is close to $n/2$, where $J(n,k)$ is a graph whose vertex set is the set of all subsets of $[n]$ of cardinality $k$, and two subsets have an edge if their intersection is equal to $k/2$. 
This should be compared to \cite{DS14}, where they show the direct product for the Johnson graph for all the layers up to $n/2$ (i.e., for all $J(n,k)$ where $k\le n/2$).

The main open problem stemming from our work is to improve the parameters in Theorem~\ref{thm:main}. In particular, does the following hold?
\begin{open}\label{open1}
Does $(\nicefrac{1}{2},\nicefrac{1}{2})$-coordinate expansion imply a direct product theorem?
\end{open}

A positive resolution of the above open question would imply direct product testability on the test graph isomorphic to the Johnson graph for every layer of the Boolean hypercube (completely recovering the results in~\cite{DS14}). It even implies a direct product testability on a new domain: where the subsets are stemming from $d$-dimensional subspaces of $\mathbb{F}_2^m$ and two subsets are connected by an edge if they intersect on a $(d-1)$-dimensional subspace (this is referred to as the Grassmann graph). Finally, we would like to  recall that Theorem~\ref{thm:main} states that $(\nicefrac{1}{33},\nicefrac{1}{2})$-coordinate expansion implies a direct product theorem, i.e., in order to positively resolve Open~Problem~\ref{open1}, we might need to improve the analysis in the proof of Theorem~\ref{thm:main} to accommodate test graphs with weaker expansion properties. 

In fact, if we can resolve Open~Problem~\ref{open1} in a slightly stronger way i.e., if we show that for some small enough constant $\gamma>0$, we have $(\nicefrac{1}{2}+\gamma,\nicefrac{1}{2})$-coordinate expansion implies a direct product theorem then we recover the testability result of~\cite{DK17} on Ramanujan complexes. Summarizing, we view the study of coordinate expansion as providing a unified framework to prove direct product theorems.  Also, it might be useful in the future to establish direct product testability for new domains (in a black-box manner).

\subsubsection{Sliding Window Domain }\label{sec:intro-sliding}

In this subsubsection, we define a new direct product testable domain which we call the sliding window domain, and also discuss about the necessary and sufficient structure that a domain (and test graph) should have, in order to admit direct product testing.

For every $n,k$, the sliding window domain $\S\subseteq\nk$ is the collection of all contiguous $k$-sized subsets (windows) of $[n]$, i.e., $\S=\{\{i,\ldots, i+k-1\}\mid i\in [n]\}$, where the addition is done modulo $n$. Two vertices (i.e., subsets in $\S$) have an edge in the test graph, if their intersection is non-empty. Notice that $|\S|=n$ and yet we show that it admits a direct product theorem (see Theorem~\ref{thm:sliding} for a simple proof).

Let us put the above result in context with the recent breakthrough of Dinur and Kaufman \cite{DK17}. 
In \cite{DK17}, the authors obtain a direct product testable domain (subset of $\nk$) of size $O({2^{k^2}n})$. The domain arises from the highly non-trivial object called Ramanujan complex. Such a domain is studied because apart from admitting a direct product theorem over a domain of size linear in the universe (i.e., $n$), it also has other desirable properties such as \emph{distance amplification} which are needed for applications in gap and hardness amplification. Thus, our direct product testing result (Theorem~\ref{thm:sliding}) provides a conceptual clarification that if one is only interested in direct product testing as a property testing question, then there is a very simple domain of size $n$, namely the sliding window domain, which is testable. 

Roughly speaking, a domain (subset of $\nk$) has distance amplification if for every two strings of relative distance $\delta$, the relative distance between their direct product encoding is $\Omega(k\delta)$. This seems to be a crucial property for PCP applications of direct product testing. 
Thus, the construction of the sliding window domain provides a conceptual clarification as to why we need high dimensional expanders: we can obtain direct product testing from simple constructions like the sliding window domain and we can obtain distance amplification from known constructions of vertex expanders (see Appendix~\ref{sec:distamp}); but to obtain both simultaneously, \cite{DK17} needed high dimensional expanders.
We leave it as an open question whether there exists a simple construction admitting both direct product testability and distance amplification.

\begin{open}\label{open3}
Is there a (relatively) simple domain of linear size in the universe (i.e., $O_k(n)$) for which we have both direct product testing and distance amplification?
\end{open}

\noindent\textbf{Lack of Global Expansion.}
We would like to now briefly discuss about the minimal structure of the domain (and the test graph) sufficient to prove a direct product theorem. This is highlighted by the sliding window domain, an in particular by the proof of its testability (Lemma~\ref{lem:clique} to be precise).
Notice that $G_\S$ has very bad edge-expansion/vertex-expansion but is a very good local expander, i.e., the induced subgraph containing any particular coordinate has good expansion (in fact is a clique). Lemma~\ref{lem:clique} guarantees that in such situations\footnote{Lemma~\ref{lem:clique} can be generalized to accommodate test graphs which are locally subgraphs that strongly satisfy the expander mixing lemma.} the domain admits direct product testing if for every vertex in the test graph, and every element in that vertex, the probability of retaining that element when moving to a uniformly random neighbor is bounded from below by a positive constant. The probability of retaining a coordinate when moving to a random neighbor is $\nicefrac{1}{2}$ in $\S$, and thus $\S$ admits a direct product theorem. Therefore, $\S$ demonstrates that direct product testing does not require the test graph to be an expander (like the Johnson/Ramanujan graph) but only needs to have certain local expansion properties.
Finally, recall that we had earlier argued that local expansion is necessary (to justify the need for condition~2 in Definition~\ref{def:CE}) for direct product testing. 

Finally, it seems that conditions 1 and 3 in coordinate expansion are not (necessarily) needed for direct product testing, but are merely artifacts of our proof (Theorem~\ref{thm:main}). However, these conditions might imply distance amplification\footnote{This would be an interesting question to resolve in either direction.} and are typically guaranteed in structured domains of interest (namely, Johnson, Grassmannian, and Ramanujan).

\subsection{Technical Contribution: Proof Overview of Theorem~\ref{thm:main}}\label{sec:tech}
For the sake of convenience, through out this subsection, we fix $V=\nk$ and the test would pick pairs $(S,S')$ that intersect on $k/2$ elements and checks for agreement.
As suggested above there is a natural way to decode any function $F:V\to \zo^k$ using the majority operator: define a string $a\in \zo^n$ by setting $a_i$
to be the majority value of $F(S)_i$ for all $S\ni i$.
We define $B=\{S|F(S)\neq \DP_V(a)(S)\}$, i.e., $B$ is the subset of the domain that disagrees with the direct product encoding of the decoded string. Also for $S\in B$ we call $i\in S$ conflicting if $F(S)_i\neq a_i$.  Our goal is to show that the test rejects with probability $\Omega(\abs B/\abs{V})$ as $|B|/|V|$ is the relative distance between $F$ and $\DP_V(a)$.

Indeed fix $S\in B$, then it must contain at least one conflicting coordinate, say $i$. Observe that with probability $1/2$ we also have that $i\in S'$. Now if $S'$ were a random set containing $i$, then since at least half
of the elements that contain $i$ agree with the majority value, the test rejects with probability $1/2$. And the overall rejection probability of the test would be at least $\frac{\abs B}{4\abs{V}}$ and we are done. 

However, $S'$ is not a random set that contains $i$, it intersects with $S$ on further $k/2-1$ coordinates. Therefore, it may well be that among the neighbors of $S$ that contain $i$ we do not see the majority value so often. A natural way to overcome this is by aggregating all $S$s' that contain $i$ and disagree with the majority value on $i$. We could try to show that if we start from some member of this set then with constant probability we reach $S'$ that contains $i$ and resides outside of this aggregated set (by using the local expansion property). But this leads into another problem: using this argument sets $S$ that contain many conflicting coordinates are counted many times, whereas sets that contain few conflicting coordinates are counted much less. 
%If the number of conflicting coordinates among the set $S\in B$ does not vary a lot, then there is a way to overcome this problem by selecting (at random) a single conflicting coordinate and focusing on the rejection probability based only on the value of the selected coordinate.

Our analysis proceeds by studying the variance of the number of conflicting coordinates in the following manner. We first sort the set $B$ based on the number of their conflicting coordinates. Let $B_L$ (resp.\ $B_H$) be the first (resp.\ last) third of the elements in $B$ according the sorting. We first show that if the number of conflicting coordinates of each member in $B_L$ is much smaller than it is in $B_H$, then the test rejects with probability $\Omega(\frac{\abs B}{\abs{V}})$. To show this, we prove that whenever the test picks $S\in B_H$ then with constant probability $S'$  is in $B_L \cup \{V\setminus B\}$ (by using the global expansion property).
Moreover, there is a large subset $\Gamma$ of conflicting coordinates in $S$ which are also in $S'$ (follows from condition~3 in Definition~\ref{def:CE}). However, $S'$ has few conflicting coordinates in total (by our choice of $S'$), and thus, there must be a coordinate in $\Gamma$ that agrees with the majority value on $S'$ but disagrees on it on $S$ and hence the test rejects the edge $(S,S')$.

 On the other hand, if the number of conflicting coordinates does not vary a lot among these sets, then we analyze the test by selecting (at random) a single conflicting coordinate in $S$ and focusing on the rejection probability based only on the value of the selected coordinate. 

\subsection{Related Work}
The question of testing the direct product was studied extensively when the underlying domain $\D=\nk$~\cite{Gs97,DR06,DG08,DS14,IKW12}. 
In this setting,  Goldreich and Safra~\cite{Gs97} proposed a constant query test. Dinur and Reingold \cite{DR06} suggested the two-query test mentioned above and analyzed it in the high acceptance regime but with a relaxed  distance measure.

The state of the art in this context is 
the result of Dinur and Steurer\footnote{The result in \cite{DS14} is stated in the language of tuples, i.e., the domain is a subset of $[n]^k$, but their result also holds when the domain is a collection of $k$-sized subsets of $[n]$. See \cite{DDGKS17} for more details.} \cite{DS14} dealing with the domain $\D=\nk$ where $k$ varies between $2$ and $n/2$. They analyze the aforementioned two-query test with $k/2$-intersection size. They analyze it in the high acceptance regime and show that $\nk$ indeed admits a direct product testing theorem. The proof is quite involved and in particular analyzes first the low acceptance regime. Recently, in a breakthrough paper, 
Dinur and Kaufman~\cite{DK17} analyzed the two-query test when the underlying domain is obtained from the set of faces of a Ramanujan complex.
Their approach crucially relies on the result of~\cite{DS14}. More recently Dinur et al.\ \cite{DHKLT19} introduced the notion of \emph{double samplers} and remarked that it might admit a direct product theorem.

We remark that the direct product testability question was further analyzed in the low acceptance regime under the domain $\nk$, see~\cite{DG08,IKW12,DN17} and also under the domain where the universe is $\mathbb{F}_2^m$, and the domain is the set of all subspaces of $\mathbb{F}_2^m$ \cite{IKW12}.

\subsection{Organization of the Paper}
Section~\ref{sec:prel} lists the notations and technical tools that we use in the paper. In Section~\ref{sec:setting} we formalize the notion of direct products and their testing. In Section~\ref{sec:main} we prove our main technical result, namely, that whenever the underlying test graph is a $(\lambda,\rho)$-coordinate  expander it admits a direct product testing theorem. 
Finally, in Section~\ref{sec:slidingWindow} we introduce the sliding window domain for which we show a direct product theorem.  

\section{Preliminaries}\label{sec:prel}

In this section, we list the notations and technical tools used in this paper.

\noindent\textbf{Notations.} We use the following notations throughout the paper. We denote the set $\{1,\ldots ,n\}$ by $[n]$. For any $n,k\in\mathbb{N}$, with $k\le n$, we denote by $\binom{[n]}{k}$, all subsets of $[n]$ of cardinality $k$.
For any set $S$, we denote by $\mathcal{P}(S)$ the power set of $S$, i.e., the set of all subsets of $S$. 
For any graph $G(V,E)$ and any two subsets $S,T\subseteq V$, we denote by $E(S,T)$ the set of all edges between $S$ and $T$. For any $x,y\in\{0,1\}^n$, we denote by $\Delta(x,y)$ the relative Hamming distance between $x$ and $y$ given by the fraction of coordinates in which $x$ and $y$ differ.

\noindent\textbf{Johnson Graph Family.} For every $n,k,t\in\mathbb{N}$ such that $t\le k\le n$, $J(n,k,t)$ is a graph which is a member of the Johnson graph family, whose vertex set is  $\binom{[n]}{k}$, and whose edge set is $\{(S,S')\mid S,S'\in \binom{[n]}{k}, |S\cap S'|= t\}$.

\noindent\textbf{Expander Mixing Lemma.} 
The following is a standard claim concerning the expansion of two sets in expander graphs. For completeness we include a proof in Section~\ref{sec:appendix}:
\begin{claim}
	\label{claim:expanderMixing}
	Let $G=(V,E)$ be a $d$-regular graph and $A$ be its adjacency matrix.  Let $\lambda$ be its second largest eigenvalue in absolute value.
	Let $S,T\subseteq V$ satisfying: $\abs {S}\le \abs {V}/2$ then:
	\[ \Pr_{(u,v)}[v\in T|u\in S ]\le \frac{\abs{T}}{\abs {V}}+\frac \lambda d\sqrt{\frac{ \abs{T}}{\abs{S}}},  \]
	where the probability is  given by first picking $u$ uniformly at random from $S$, and then picking $v$ according to $A$. 	
	Furthermore, let $\mu$ be a distribution on $S$ satisfying that for every two elements $b,b'\in S$:
	$ \mu(b)\le c\mu(b')$, then:
\[\Pr_{(u,v)} [v\in T\mid u\sim\mu] \le 
\frac{|T|}{|V|} +\frac \lambda d \cdot \sqrt{\frac{c \abs{T} }{\abs{S}}} \]
\end{claim}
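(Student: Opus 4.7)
I would prove both inequalities together via a spectral argument, observing that the first is exactly the special case $c=1$ of the second, with $\mu$ uniform on $S$. The core tool is that on the orthogonal complement of the all-ones vector $\mathbf{1}$, the adjacency matrix $A$ has operator norm at most $\lambda$, since $A\mathbf{1} = d\mathbf{1}$ and the remaining eigenvalues have modulus at most $\lambda$.

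Let $x \in \mathbb{R}^V$ be the probability vector with $x_u = \mu(u)$ for $u \in S$ and $x_u = 0$ otherwise, so that
\[ \Pr_{u\sim\mu,\, v\sim u}[v\in T] \;=\; \tfrac{1}{d}\,\langle \vecOne{T},\, Ax\rangle. \]
I would decompose $x = \tfrac{1}{\abs{V}}\mathbf{1} + x^\perp$ and $\vecOne{T} = \tfrac{\abs{T}}{\abs{V}}\mathbf{1} + y^\perp$, with $x^\perp, y^\perp$ orthogonal to $\mathbf{1}$. Using $A\mathbf{1} = d\mathbf{1}$, all cross terms cancel and we get $\tfrac{1}{d}\langle \vecOne{T}, Ax\rangle = \tfrac{\abs{T}}{\abs{V}} + \tfrac{1}{d}\langle y^\perp, Ax^\perp\rangle$. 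Cauchy--Schwarz combined with the spectral gap yields $\abs{\langle y^\perp, Ax^\perp\rangle} \le \lambda \norm{y^\perp}\,\norm{x^\perp}$, and a direct calculation gives $\norm{y^\perp}^2 = \abs{T}(1-\abs{T}/\abs{V}) \le \abs{T}$ and $\norm{x^\perp}^2 \le \norm{x}_2^2$.

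The only step with genuine content is bounding $\norm{x}_2^2$ using the ratio hypothesis $\mu(b)\le c\,\mu(b')$ for all $b,b'\in S$. Since $\sum_{b\in S}\mu(b)=1$ forces $\min_{b\in S}\mu(b) \le 1/\abs{S}$, the hypothesis upgrades this to $\max_{b\in S}\mu(b) \le c/\abs{S}$, so $\norm{x}_2^2 \le \norm{x}_\infty \cdot \norm{x}_1 \le c/\abs{S}$. Substituting back yields $\tfrac{1}{d}\abs{\langle y^\perp, Ax^\perp\rangle} \le \tfrac{\lambda}{d}\sqrt{c\,\abs{T}/\abs{S}}$, which is exactly the second inequality; specializing to $c=1$ (uniform $\mu$ on $S$) recovers the first. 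I do not anticipate a real obstacle here: the argument is a routine unwinding of the spectral decomposition, and the only ``trick'' is the elementary $\ell_\infty$-to-$\ell_2$ bound that converts the ratio hypothesis into a usable norm bound on $x$.
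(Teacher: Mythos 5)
Your proposal is correct and follows essentially the same route as the paper: you use the same decomposition of both the distribution vector and the indicator $\vecOne{T}$ into their components along $\mathbf{1}$ and the orthogonal complement, the same Cauchy--Schwarz plus spectral-gap bound on the cross term, and the same $\ell_\infty$-to-$\ell_2$ trick (via $\min_{b\in S}\mu(b)\le 1/\abs{S}$ and the ratio hypothesis) to bound $\norm{x}_2^2$ by $c/\abs{S}$. The only cosmetic difference is that the paper keeps the $(1-\abs{T}/\abs{V})$ and $-1/\abs{V}$ terms a step longer before discarding them, whereas you drop them up front; the resulting bound is identical.
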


\section{Direct Product Testing: The Setting}\label{sec:setting}
In this section, we formalize the notion of direct products and their testing. Specifically, we formalize the notion of direct product testing through test graphs, which is slightly non-standard but it helps  in  introducing the notion of coordinate expansion in a later section  succinctly.

For every subset $S$ of $[n]$, let $\mathcal{F}_S$ be the class of all functions whose domain is $S$ and range is $\{0,1\}$.
Let $\G\subseteq \mathcal{P}([n])$ be the domain of the direct product. Let $\mathcal{F}_V$ be the class of all functions whose domain is $\G$ and maps every subset $S$ in $\G$ to a function in $\mathcal{F}_S$. The direct product encoding is a function $\DPV:\{0,1\}^n\to \mathcal{F}_V$ defined as follows: for every string $a\in\{0,1\}^n$, and every subset $S\in \G$, let $\DP_{V}(a)_S$ be defined as the projection function which maps $S$ to $a_S$, the string $a$ restricted to only the coordinates in $S$.

\begin{definition}
	For two functions $F,G\in \mathcal{F}_V$ we define their relative distance as: $$\Delta(F,G)=\frac{\abs{\{S\in\G|F(S)\neq G(S)\}}}{|\G|}. $$ 
	For a function $F$ and a set of functions $\tilde{G}$ we define the distance between $F$ and $\tilde{G}$ as the minimal distance between $F$ and some function $G\in \tilde{G}$. If $\Delta(F,\tilde{G})\le \delta$, we say that $F$ is $1-\delta$-close to $\tilde{G}$, otherwise,
	it is $\delta$-far from $\tilde{G}$.
\end{definition}

For every function $F\in\mathcal{F}_V$, we define $\d(F)$ as follows: Given $F$ construct $a^F\in \zo^n$ in the following way,
	$$a_i^F= \underset{\substack{S\in\G \\ S\ni i}}{\maj}(F(S)_i).$$
	Then, we define $\d(F):=\DP_{\G}(a^F)$.

Let $G_{\G}$ be a graph whose vertex set is $V$. Then we interpret $G_{\G}$ as a test graph on functions defined on $\mathcal{F}_V$ in the following sense: 

\begin{center}
\fbox{%
	\parbox{0.62\textwidth}{%
		\noindent\textbf{Test $\mathcal T(G_{\G})$}:
		\par
		\noindent \textbf{Input}: A function $F\in\mathcal{F}_V$.
		\par
		\noindent \textbf{Procedure}: 
		 Pick an edge $(S,S')$ in $G_V$ uniformly at random.
		
				\noindent \textbf{Output}: Accept if and only if $F(S)|_{S\cap S'}=F(S')|_{S\cap S'}$.

}}
\end{center}

It is important to note that we allow self loops and multiple edges between a pair of vertices. Also, we can generalize the above direct product testing setting to the case when $V$ is a multiset of $\mathcal{P}([n])$, and the results in this paper still hold. However, we choose not to handle this more general setting for the sake of clarity of presentation. The above remark also applies to the case of studying test graphs which are not regular in degree, that are not considered in this paper. 
Finally, throughout the paper, we drop the subscript $V$ in $G_V$, if $V$ is clear from the context.

\section{Direct Product Testing: Coordinate Expansion}
\label{sec:main}
In this section we prove our main technical result, namely, that whenever the underlying test graph is a $(\lambda,\rho)$-Coordinate  Expander (defined next) it admits a direct product testing theorem. 
\begin{definition}[$(\lambda,\rho)$-Coordinate   Expander]
	Let $n\in \mathbb{N}$ and let $G=(V,E)$ be a test graph, where $V\subseteq \mathcal{P}([n])$. For $i\in [n]$ let $V_i=\sett{S\in V}{i\in S}$ and $G_i$ be the subgraph of $G$ induced by the vertices in $V_i$. 	
Let $\lambda(G)=\max\{\abs{\lambda_2(A_G)},\abs{\lambda_{\abs{V}}(A_G)}\}$, where $A_G$ is the normalized adjacency matrix of $G$.
	The graph $G$ is called $(\lambda,\rho)$-coordinate   expander if:
	\begin{enumerate}
		\item $\lambda(G)<\lambda$ and for every $i\in [n]$ we have $\lambda(G_i)<\lambda$.
			\item For every $i\in [n]$ and for each $S\in V_i$ we have $\underset{S'\sim S}{\Pr}[S'\in V_i]\ge\rho$.
		\item For every subset $S$ and $T\subseteq S$, satisfying $\abs{T} \ge 2 /\rho$ , we have
		$\underset{S'\sim S}{\Pr}[|S'\cap T|\leq \rho |T|/2]\leq \lambda$. 	\end{enumerate} 
	\end{definition}

Informally, a domain is a coordinate expander if the test graph is an expander and every induced subgraph of the test graph containing a fixed coordinate is also an expander\footnote{Actually, the property of an expander that we need is that for any two sets of vertices $S,T$ in the graph, the number of edges between $S$ and $T$ is roughly equal to $\alpha|S||T|$, where $\alpha$ is the density of the edge set of the graph.}, and it has good correlation/intersection properties -- i.e., for any subset $S$ and coordinate $i\in S$, an uniformly random neighbor of $S$ contains $i$ with constant probability (say $\rho>0$), and for every $S$ in the domain, and any subset $T$ of $S$, the number of elements of $T$ that we see in a random neighbor of $S$ is close to the expected number, which is $\rho\cdot |T|$.
Below, we see that coordinate expansion of the test graph implies a direct product theorem for the underlying domain.

\begin{theorem}\label{thm:expanders}
	Let $n\in \mathbb{N}$,  and let $\rho\ge 1/2$ and $\lambda\le 1/33$ be some constants. Let $G=(V,E)$ be a graph, $V\subseteq \mathcal{P}([n])$, let $\varepsilon\ge 0$, and $F\in\mathcal{F}^V$. 	
	Let $G$ be a $(\lambda,\rho)$-coordinate   expander. If $F$ passes $\Test(G)$ with probability $1-\varepsilon$ then $F$ is $1-O(\varepsilon)$-close to $\d(F)$.
\end{theorem}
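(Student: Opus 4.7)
The plan is to follow the decoding strategy sketched in Section~\ref{sec:tech}. Set $a = a^F$ via coordinate-wise majority and let $B = \{S \in V : F(S) \neq \DPV(a)(S)\}$, with $\delta := |B|/|V|$. For each $S \in B$, call $i \in S$ a \emph{conflicting coordinate} if $F(S)_i \neq a_i$, and let $c(S)$ denote the number of such coordinates. The target is to show that the rejection probability of $\Test(G)$ is $\Omega(\delta)$.

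The basic obstacle is that fixing a conflicting coordinate $i$ in some $S \in B$ and hoping that a random neighbor $S'$ exposes the conflict gives a biased sample of $V_i$ rather than a uniform one, so we cannot directly invoke the majority property of $a_i$. To handle this I would sort $B$ by $c(\cdot)$ and split it into a low-conflict third $B_L$ and a high-conflict third $B_H$, with thresholds $t_L := \max_{S\in B_L} c(S)$ and $t_H := \min_{S\in B_H} c(S)$. Then I would analyze two regimes separately, calibrating the split so that every case recovers at least $\Omega(\delta)$ rejection probability.

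\textbf{High-variance case} ($t_H$ sufficiently larger than $t_L$, and $t_H \geq 2/\rho$): The rejection contribution comes from edges incident to $B_H$. Fix $S \in B_H$ and let $T \subseteq S$ be its conflicting set, so $|T| \geq t_H$. By global expansion (condition~1 together with Claim~\ref{claim:expanderMixing} applied to $B_H$ versus $B \setminus B_L$) a random neighbor $S'$ lands in $B_L \cup (V \setminus B)$ with probability $\geq 1 - O(\delta) - O(\lambda)$. By condition~3 applied to $T$, with probability $\geq 1 - \lambda$ we also have $|S' \cap T| \geq \rho t_H / 2$. A union bound leaves constant probability that both events hold simultaneously, in which case $S'$ has at most $t_L$ conflicts of its own but shares at least $\rho t_H/2 > t_L$ coordinates of $T$; by pigeonhole some $i \in S \cap S' \cap T$ satisfies $F(S')_i = a_i \neq F(S)_i$, so the edge $(S,S')$ is rejected. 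This yields rejection probability $\Omega(|B_H|/|V|) = \Omega(\delta)$.

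\textbf{Low-variance case} ($t_H$ comparable to $t_L$, plus the residual regime where $t_H$ is bounded by $2/\rho$): all of $B$ has comparable conflict counts $\Theta(t_L)$, which enables a direct averaging argument. I would sample $S$ uniformly from $B$, then a uniformly random conflicting coordinate $i \in S$. Condition~2 gives that a random neighbor $S' \sim S$ lies in $V_i$ with probability $\geq \rho$, so conditioning on $S' \in V_i$ makes $(S,S')$ a $G_i$-edge. Letting $A_i := \{S' \in V_i : F(S')_i = a_i\}$, the definition of majority gives $|A_i| \geq |V_i|/2$. The spectral gap of $G_i$ (condition~1 for induced subgraphs), together with the furthermore clause of Claim~\ref{claim:expanderMixing} applied to the (nearly-uniform) induced distribution on the starting vertices of $G_i$ coming from $B \cap V_i$, controls $\Pr[S' \in A_i]$ against $|A_i|/|V_i| \geq 1/2$ up to additive $O(\lambda)$. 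Thus the averaged probability that the sampled coordinate $i$ is retained in $S'$ and witnesses $F(S')_i = a_i \neq F(S)_i$ is a positive constant, and weighting this by the uniform measure on $B$ produces rejection probability $\Omega(\delta)$.

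The hardest step will be calibrating the thresholds and constants: one needs $t_H$ large enough relative to $t_L$ so that the pigeonhole in the high-variance case produces at least one exposed conflict, while keeping the boundary case of ``roughly uniform'' conflict counts amenable to the averaging argument. In the low-variance case the delicate point is applying Claim~\ref{claim:expanderMixing} on $G_i$ in a regime where the relevant sets have comparable size, so the $\sqrt{c|A_i|/|\text{source}|}$ error term is well controlled by the bound $\lambda \le 1/33$. The specific numerical constants $\lambda \leq 1/33$ and $\rho \geq 1/2$ in the statement should then fall out of optimizing the threshold separating the two cases.
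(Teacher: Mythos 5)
Your overall architecture matches the paper's proof almost exactly: decode by majority, sort $B$ by conflict count, split into a high-variance case handled via global expansion plus condition~3 plus pigeonhole, and a low-variance case handled via local expansion of $G_i$. However, there are two genuine gaps in the low-variance case.

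First, your claim that ``all of $B$ has comparable conflict counts $\Theta(t_L)$'' when $t_H$ is comparable to $t_L$ is false. The thresholds $t_L=\max_{S\in B_L}c(S)$ and $t_H=\min_{S\in B_H}c(S)$ only bracket the \emph{middle} portion of $B$: a set $S\in B_L$ can have $c(S)$ as small as $1$, and a set $S\in B_H$ can have $c(S)$ as large as $k$. Since your low-variance argument relies on the ``furthermore'' clause of Claim~\ref{claim:expanderMixing} -- which needs the conditional distribution on $\{S : i\in B(S)\}$ (proportional to $1/|B(S)|$) to have bounded ratio -- sampling $S$ uniformly from all of $B$ gives an unbounded ratio. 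The paper resolves this by only ``charging'' rejections to $S\in B_{(c,1-c)}$, the middle $(1-2c)$-fraction, on which $|B(S)|$ ranges over a bounded window $[m_c, \tfrac{4}{\rho^2}m_c]$. You need the same restriction.

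Second, your choice of thirds for the split does not survive the numeric calibration, and for a structural reason rather than merely a fussy one. In the low-variance case the rejection bound has the form $\tfrac{\rho}{2}-2\lambda - 2c\beta - \lambda\sqrt{\tfrac{2c}{1-2c}}$ (where $c$ is the fraction discarded from each end); after using $\beta\le 1$ the term $-2c$ must be dominated by $\rho/2\le 1/2$. With $c=1/3$ one gets $2c = 2/3 > 1/2$, so the bound is negative regardless of $\lambda$ -- the thirds split simply cannot work. The paper takes $c=3/40$ precisely so that $2c$ stays well below $\rho/2$, and then pays for the smaller $c$ by splitting the high-variance case into two subcases ($m_{1-c}>\tfrac{2}{\rho}m_{1/2}$ and $m_{1/2}>\tfrac{2}{\rho}m_c$) so that each expander-mixing application compares sets of comparable sizes. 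You will need to replace thirds with a small $c$ and, as a consequence, introduce this second threshold for the high-variance analysis.
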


\begin{proof}	Let $F^*:=\d(F)=\DPV(a^F)$. We define $B,C\subseteq V$  as follows:
	$$ B=\{S\mid F(S)\neq F^*(S)\} \text{ and } C=V\setminus B.$$ 
	Let $\beta= \abs{B}/\abs{V}$. 
	Given a subset $S\in V$ we say that a coordinate $i$ is conflicting if the value of $F(S)$ at $i$ does not equal $a_i^F$. For a set $S$ denote by $B(S)$ the set of conflicting coordinates in $S$. We show that $\Test(G)$ rejects with  probability at least $\Omega(\beta)$.
		
Let us sort in ascending order the elements of $B$ based on the number of coordinates on which they disagree with $F^*$. 
For a parameter $0\le p \le 1$ we define the set $B_{\ge p}$  as the set of last $(1-p)\abs{B}$ elements of $B$ (and similarly  the set $B_{\le p}$ is the set of the first $p\abs{B}$ elements of $B$).
We denote by $m_p$ the number of conflicting coordinates of the $p\abs{B}$-th element of $B$.

Let $c=3/40$. We consider two cases based on  $m_{c}, m_{1/2}$ and $m_{1-c}$.

\paragraph{Case 1: $m_{1-c}>\frac{2}{\rho}m_{1/2}$ or $m_{1/2}>\frac{2}{\rho}m_{c} $:}

For both the possibilities we have similar arguments, which is why they are clubbed under one case, but will be handled separately for ease of presentation.

\paragraph{Case 1A: $m_{1-c}>\frac{2}{\rho}m_{1/2}$:}

The probability that an uniformly random $S\in V$ is in  $B_{\ge 1-c}$ is $c\beta$. Now by   Claim~\ref{claim:expanderMixing}, we get that
$$ \Pr [S'\in B_{> 1/2} | S\in B_{\ge 1-c}] < \beta/2 +\lambda\sqrt{\frac{ 1}{2c}}, $$
so with probability at least $1-\beta/2 -\lambda\sqrt{\frac{ 1}{2c}}$ if $S\in B_{\ge 1-c}$ then $S'\in B_{\le 1/2} \cup C$. 

Now, by the third property of  $( \lambda,\rho)$-coordinate   expander, the probability that $\abs{S' \cap B(S)}\le  \frac{\rho}{2} |B(S)|$ is at most $\lambda$. Notice that the probability that $\abs{S' \cap B(S)}\le  m_{1/2}$ is at least the probability that $\abs{S' \cap B(S)}\le  \frac{\rho}{2} |B(S)|$ (because $m_{1/2}<\frac{\rho}{2} m_{1-c}\le \frac{\rho}{2} |B(S)|$). Hence we have that the probability that $\abs{S' \cap B(S)}\le m_{1/2}$ is at most $\lambda$.

Overall, using union bound,  conditioned on $S\in B_{\ge 1-c}$, the probability that $S'\in B_{\le 1/2} \cup C$ and $\abs{S' \cap B(S)}>m_{1/2}$ is at least $1-\beta/2 -\lambda\sqrt{\frac{ 1}{2c}}-\lambda$. But in such a case since $S'\in B_{\le 1/2} \cup C$ we get $|B(S')|\le m_{1/2}$, so there exists at least one coordinate $i\in S \cap S'$ on which $F(S')_i=a_i^F$ but  $F(S)_i\neq a_i^F$, so the test rejects. In total $\Test$ rejects with probability at least $c\beta \left(1-\beta/2 -\lambda\sqrt{\frac{ 1}{2c}}-\lambda\right)\ge c\beta \left(1/2 -\lambda\left(\sqrt{\frac{ 1}{2c}}+1\right)\right)$ (where we used a trivial bound that $\beta\le 1$). Notice that $1/2 -\lambda\left(\sqrt{\frac{ 1}{2c}}+1\right)>0 $ holds for $c=3/40$ whenever $\lambda\le 0.13$. 

\paragraph{Case 1B: $m_{1/2}\ge \frac{2}{\rho}m_{c} $:}
In this case we would like to mimic the proof strategy of the previous case. That is we would like to show that with non-zero constant probability a random neighbor in $B_{\ge 1/2}$ is in $B_{\le c}\cup C$. By an application of Claim~\ref{claim:expanderMixing}, we get:

$$ \Pr [S'\in B_{> c} | S\in B_{\ge 1/2}] <  (1-c)\beta + \lambda\sqrt{2-2c}, $$
so with probability at least $1-(1-c)\beta - \lambda\sqrt{2-2c}$ if $S\in B_{\ge 1/2}$ then $S'\in B_{\le c} \cup C$. 

Now, by the third property of  $( \lambda,\rho)$-coordinate   expander, the $\Pr[\abs{S' \cap B(S)}\le  \frac{\rho}{2} |B(S)|]$ is at most $\lambda$. Notice that $m_c \le \frac{\rho}{2}m_{1/2}\le \frac{\rho}{2}\cdot |B(S)| $ and thus $\Pr[\abs{S' \cap B(S)}\le  \frac{\rho}{2} |B(S)|]\ge \Pr[\abs{S' \cap B(S)}\le m_c]$. Therefore we have $\Pr[\abs{S' \cap B(S)}\le m_c]\le \lambda$.

Overall, using union bound,  conditioned on $S\in B_{\ge 1/2}$, the probability that $S'\in B_{\le c} \cup C$ and $\abs{S' \cap B(S)}>m_c$ is at least $1-(1-c)\beta - \lambda\sqrt{2-2c}-\lambda$. But in such a case since $S'\in B_{\le c} \cup C$ we get $|B(S')|\le m_c$, so there exists at least one coordinate $i\in S \cap S'$ on which $F(S')_i=a_i^F$ but  $F(S)_i\neq a_i^F$, so the test rejects. In total $\Test$ rejects with probability at least $\frac{\beta}{2} \left(1-(1-c)\beta - \lambda\sqrt{2-2c}-\lambda\right)\ge \frac{\beta}{2} \left(c - \lambda\left(\sqrt{2-2c}+1\right)\right)$ (where we used a trivial bound that $\beta\le 1$). Notice that $\left(c - \lambda\left(\sqrt{2-2c}+1\right)\right)>0$ holds for $c=3/40$ whenever $\lambda\le 0.03177$. 

\paragraph{Case 2: $m_{1-c}\le \frac{4}{\rho^2}m_c$:}
Define $B_{(c,1-c)}$ as the set $B\setminus (B_{\le c} \cup B_{\ge 1-c})$.
Observe that in $B_{(c,1-c)}$ the number of conflicting coordinates is between $m_c$ and $4m_c/\rho^2$.
Now we would like to consider a different test $\Test'(G)$ that selects $S,S'$ according to $G$.
If $S\notin B_{(c,1-c)}$ then  $\Test'$ accepts. Otherwise, it picks uniformly at random $i_0\in B(S)$ 
and checks for consistency \emph{only} on $i_0$, namely: 
It rejects iff $i_0 \in S'$ and  
$F(S)_{i_0}\neq  F(S')_{i_0}$. Clearly the rejection probability of $\Test'(G)$ is at most the rejection probability of $\Test(G)$. We conclude the proof by showing that $\Test'(G)$ rejects $F$ with probability $\Omega(\beta)$.

With probability  $(1-2c)\beta$ the test $\Test'$ selects $S\in B_{(c,1-c)}$ and we would like to analyze the rejection probability conditioned on that.
For this sake we bound the probability of the following events:
\begin{itemize}
	\item  $E_1$ is the event where $S'\in B_{\le c} \cup B_{\ge 1-c}$.
	\item $E_2$ is the event where $i_0 \in S'$ and $S' \notin \tilde{B}_{i_0}$ where $\tilde{B}_{i}=\sett{S\in B_{(c,1-c)}}{F(S)_i\neq a_i^F}$.
	%	 	\item $E_3$ is the event where $S'\in  B_M$ conditioned on $\neg E_2$. 
\end{itemize}

If the event $E_2$ occurs but $E_1$ does not, then it must be the case that $F(S')_{i_0}=a_{i_0}^F$. Hence $\Test'$ rejects. 
As a consequence $\Pr[\Test' \text{ rejects} ]\ge (1-2c)\beta(\Pr[E_2| S\in B_{(c,1-c)}]- Pr[E_1|S\in B_{(c,1-c)}])$. Thus it suffices to show that $(\Pr[E_2| S\in B_{(c,1-c)}]- Pr[E_1|S\in B_{(c,1-c)}])$ is a positive constant bounded away from 0.

To bound the probability for the event $E_1$ we use Claim~\ref{claim:expanderMixing}: The probability of $E_1$ conditioned on $S\in B_{(c,1-c)}$ is at most $ 2c \beta +\lambda \sqrt{\frac{2c}{1-2c}}$.

Since the graph $G$ is a $(\lambda,\rho)$-coordinate expander then for each $i\in S$, we have that $\Pr[i\in S']\ge \rho$, in particular this is true for $i_0$, hence: $\Pr[i_0\in S']\ge  \rho$. 

Now we divide the event $E_2$ into disjoint events depending on the value of $i_0$ and bound the rejection probability of $\Test'$ conditioned on specific value of $i_0$.
Fix $i\in [n]$ and assume that $\Test'$ selects $S,S'\in V_i$ and sets $i_0=i$ (so $S\in \tilde{B}_{i}$). We denote by $\beta_i$ the fraction $\frac{\abs{\tilde{B}_{i}}}{\abs{V_i}}$. Observe that $\beta_i\le 1/2$, since otherwise the majority value would become the value of $F(S)_i$, but we have $S\in \tilde{B}_{i}$.

Note, that under the assumption that $\Test'$ selects $i_0=i$ and $S\in \tilde{B}_{i}$, 
sets $S$ with few conflicting coordinates are more likely to be chosen than those who have many of them. 
However, since by our assumption the number of conflicting coordinates is between $m^*$ and $\frac{4}{\rho^2}m^*$, then sets with $m^*$ conflicting coordinates are only $4/\rho^2$-times more probable than those having $\frac{4}{\rho^2}m^*$-conflicting coordinates. 
Denote by $\mu$ the distribution of picking $S\in \tilde{B}_i$ assuming that $\Test'$ selects $i_0=i$. By an application of Claim~\ref{claim:expanderMixing} we get: 
$$\Pr_{S\sim \mu, S'}[S'\in \tilde{B}_i]\le \frac{\abs{\tilde{B}_i}}{\abs{V_i}}+\lambda\sqrt{\frac{4}{\rho^2 }} \le \frac{1}{2}+2\lambda/\rho$$

So we get that, 
$$\Pr[E_2| S\in B_{(c,1-c)}] = \left(1-\Pr_{S\sim \mu, S'}[S'\in \tilde{B}_i\mid i_0\in S']\right)\cdot \Pr[i_0\in S']\ge  \frac{\rho}{2}-2\lambda.$$

Summing up, we get: 
\begin{eqnarray*}
	\Pr[\Test' \text{ rejects}| S\in B_{(c,1-c)}] &\ge& \Pr[E_2| S\in B_{(c,1-c)}]-\Pr[E_1| S\in B_{(c,1-c)}] \\
	& \ge & \frac{\rho}{2}- 2\lambda-\left(2c +\lambda \sqrt{\frac{2c}{1-2c}}\right)\\
		& \ge & \frac{1}{4}- 2c -\lambda \left(2+\sqrt{\frac{2c}{1-2c}}\right),
\end{eqnarray*}
a constant bounded away from 0 for $c=3/40$ whenever $\lambda<0.04$.
\end{proof}

\begin{remark}\label{rem:1}
Notice that the above theorem holds for any $\lambda,\rho,c\in [0,1]$ satisfying the condition that the below three expressions are a constant bounded away from 0:
$$
1/2 -\lambda\left(\sqrt{\frac{ 1}{2c}}+1\right),\ c - \lambda\left(\sqrt{2-2c}+1\right),\ \frac{\rho}{2}- 2\lambda-\left(2c +\lambda \sqrt{\frac{2c}{1-2c}}\right).
$$
\end{remark}

In Appendix~\ref{sec:apphalf}, we consider the test graph $J(n,k,t)$ and using the above remark show a direct product theorem when $k$ is close to $n/2$ and $t$ is close to $k/2$.

\section{Sliding Window Domain}
\label{sec:slidingWindow}
In this section, we introduce the sliding window domain for which we show a direct product theorem.

\noindent\textbf{Construction.} 
Let $k,n\in\mathbb N$ such that $k\le n$.
Let $\S$ be a collection of $n$ subsets of $[n]$ of Hamming weight $k$. 
$$
\S=\{\{i,\ldots ,i+k-1\}\mid i\in[n]\},
$$
where the addition is done\footnote{Strictly speaking, the addition is done modulo $n$ and then  the resulting number is incremented by one.} modulo $n$. 

\noindent\textbf{Testability.} The domain of our direct product test is $\mathcal{A}$. The corresponding test is as follows:

\begin{center}
\fbox{%
	\parbox{0.6\textwidth}{%
		\noindent\textbf{Test $\mathcal{T}$}:
		\par
		\noindent \textbf{Input}: A function $F:\S \to\{0,1\}^k$.
		\par
		\noindent \textbf{Procedure}: 
		 Pick uniformly at random $S\in\S$. Then pick uniformly at random $S'\in \S$ such that $S\cap S'\neq \emptyset$.
		\par		
		\noindent \textbf{Output}: Accept if and only if $F(S)|_{S\cap S'}=F(S')|_{S\cap S'}$.

}}\end{center}

The test graph $G_{\S}$ of the above is given by the vertex set $\S$ and the edge set $\{(S,S')\mid S\cap S'\neq\emptyset\}$.
The correctness of the above test is shown below. We would like to emphasize that $|\S|=n$ and yet admits a direct product theorem.

\begin{theorem}\label{thm:sliding}
Let $\varepsilon\ge 0$ and $F\in\mathcal{F}_{\S}$. 
If $F$ passes $\Test(G_{\S})$ with probability $1-\varepsilon$ then $F$ is $(1-4\varepsilon)$-close to $\d(F)$.
\end{theorem}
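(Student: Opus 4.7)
The plan is to decode a string $a = a^F \in \{0,1\}^n$ from $F$ via the majority operator (so that $\d(F) = \DP_V(a)$), define $B = \{S \in \S : F(S) \neq \DP_V(a)(S)\}$ and $\beta = |B|/n$, and show that $\Test(G_\S)$ rejects with probability at least $\beta/4$. Since the rejection probability equals $\varepsilon$, this gives $\beta \leq 4\varepsilon$, i.e., $F$ is $(1-4\varepsilon)$-close to $\d(F)$.

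Two structural observations about the sliding window geometry drive the argument. First, every coordinate $i \in [n]$ lies in exactly $k$ windows of $\S$, namely those starting at positions $i-k+1, \ldots, i$ (indices modulo $n$), so by the definition of $a^F$ at most $k/2$ of these windows $T$ satisfy $F(T)_i \neq a_i$. Second, in $G_\S$ every vertex $S$ has exactly $2k-1$ neighbors (itself together with the windows starting at the $k-1$ positions on either side of the start of $S$), and for any $i \in S$ \emph{all} $k$ windows containing $i$ are neighbors of $S$, since each of them meets $S$ at least at the coordinate $i$. Equivalently, the subgraph of $G_\S$ induced on the windows through a fixed $i$ is a clique of size $k$, a constant fraction of the neighborhood of any $S \ni i$.

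The main argument is then a one-step majority-decoding analysis. For each $S \in B$, fix an arbitrary coordinate $i_S \in S$ with $F(S)_{i_S} \neq a_{i_S}$ (one exists by definition of $B$). Conditioned on the test having selected this $S$, the second observation gives $\Pr[i_S \in S' \mid S] = k/(2k-1) \geq 1/2$, and conditioned further on $i_S \in S'$ the set $S'$ is uniform over the $k$ windows containing $i_S$, so by the first observation $\Pr[F(S')_{i_S} = a_{i_S} \mid i_S \in S'] \geq 1/2$. When both events occur, $i_S \in S \cap S'$ witnesses a disagreement ($F(S)_{i_S} \neq a_{i_S} = F(S')_{i_S}$) and the test rejects. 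Summing over $S \in B$ yields $\Pr[\text{reject}] \geq (|B|/n) \cdot (1/2) \cdot (1/2) = \beta/4$, as desired.

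There is no serious obstacle here: the sliding window domain is so regular that no appeal to expander mixing or to Theorem~\ref{thm:expanders} is needed, even though $G_\S$ is a very poor global expander. The entire proof leans on the local clique structure noted above together with the simple $k/(2k-1) \geq 1/2$ bound, which is exactly the ``local expansion suffices'' point emphasized in Section~\ref{sec:intro-sliding}. The only minor points to check carefully are that the window/coordinate arithmetic is cyclic modulo $n$ and that the constant $k/(2k-1) \geq 1/2$ holds in every regime of $k \leq n$, both of which are immediate.
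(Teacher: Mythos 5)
Your proof is correct and takes essentially the same approach as the paper: the two structural observations you isolate (the $k$ windows through a fixed coordinate form a clique, and they constitute at least half of the neighborhood of any window containing that coordinate) are precisely the two hypotheses of the paper's more general Lemma~\ref{lem:clique}, and your one-step majority argument is the probabilistic rephrasing of that lemma's edge-counting proof, specialized directly to $G_{\S}$ with $c=\nicefrac{1}{2}$, yielding the identical $4\varepsilon$ bound.
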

\begin{proof}

We will in fact prove a more general direct product testing result.

\begin{lemma}\label{lem:clique}
Let $n\in \mathbb{N}$ and $G=(V,E)$ be a $d$-regular graph where $V\subseteq \mathcal{P}([n])$, let $\varepsilon\ge 0$, and $F\in\mathcal{F}_V$. 
	For every $i\in[n]$, let the induced subgraph of $V_i$ in $G$ be a clique (with self loops). Additionally, let $c>0$ be a constant such that for every $S\in V$ and every $i\in S$, the probability that a uniformly random neighbor $S'$ of $S$ in $G$ contains $i$ is at least $c$. If $F$ passes $\Test(G)$ with probability $1-\varepsilon$ then $F$ is $(1-\frac{2\varepsilon}{c})$-close to $\d(F)$.
\end{lemma}

Now we show that the above lemma gives the proof of the theorem. Let $\S_i=\{S\in \S\mid i\in S\}$.
Note that for every $i\in[n]$, the induced subgraph of $\S_i$ in $G$ is a clique (with self loops) because any two subsets in $\S_i$ have $i$ in their intersection and thus have non-empty intersection. Also for every $S\in \S$ and every $i\in S$, the probability that a uniformly random neighbor $S'$ of $S$ in $G$ contains $i$ is at least $\nicefrac{1}{2}$. Thus, from Lemma~\ref{lem:clique} the theorem follows.
\end{proof}

We complete the proof of the above theorem by showing Lemma~\ref{lem:clique} below.

\begin{proof}[Proof of Lemma~\ref{lem:clique}]
Let $F^*:=\d(F)=\DPV(a^F)$.
	Let $B\subseteq V$ be defined as follows:
	$$ B=\{S\mid F(S)\neq F^*(S)\}.$$
 Let $C_i\subseteq V_i$ be defined as follows:
	$$ C_i=\{S\in V_i\mid F(S)_i= a_i^F\}.$$ 
	By definition of $a_i^F$, it is clear that $|C_i|\ge |V_i|/2$.

	Since $F$ passes $\Test(G)$ with probability $1-\varepsilon$ this implies that the number of edges that fail $\Test(G)$ is at most $ \varepsilon\cdot \frac{|V| d}{2}$.
	
Fix $S\in B$. Fix $i\in [n]$ (arbitrarily) such that $F(S)_i\neq F^*(S)_i$. Now observe that whenever $S'\in C_i$, the test $T(G)$ rejects the edge $(S,S')$ in $G$ because $F(S)_i\neq a_i^F=F(S')_i$. This implies that there are at least $|C_i|\ge |V_i|/2\ge \nicefrac{cd}{2} $ many edges incident on $S$  that fail the test $T(G)$. Therefore, there are in total at least $|B|\cdot \nicefrac{cd}{4}$ edges that fail the test.
 Recall that the total number of rejected edges is at most $ \varepsilon\cdot \frac{|V| d}{2}$. Thus we have that $|B|/|V|\le \frac{2\varepsilon}{c}$. The proof is concluded by noting that the distance between $F$ and $F^*$ is exactly $|B|/|V|$.	
\end{proof}

Note that Lemma~\ref{lem:clique} holds even when the induced subgraph of $V_i$ in $G$ is a clique without self loops. In Appendix~\ref{sec:simple}, we provide a couple of direct product theorems on domains that are known in literature as an immediate  consequence of this lemma. 

\noindent\textbf{Lack of Global Expansion.}
Notice that $G_\S$ has very bad edge-expansion/vertex-expansion but is a very good local expander, i.e., the induced subgraph containing any particular coordinate has good expansion (in fact is a clique). Lemma~\ref{lem:clique} guarantees that  and thus $\S$ admits a direct product theorem. Therefore, $\S$ demonstrates that direct product testing does not require the test graph to be an expander (like the Johnson/Ramanujan graph) but only to have certain local expansion properties.

\noindent\textbf{Sub-linear Size Domains.}
We remark here that we could consider subsets $\tilde{\S}$ of $\S$ of size smaller than $n$ which \emph{still} admit a direct product theorem. For example consider $\tilde{\S}$ as follows:
$$
\tilde{\S}=\{\{\nicefrac{ik}{2},\ldots ,\nicefrac{ik}{2}+k-1\}\mid i\in[2n/k]\},
$$
and the test graph $G_{\tilde{\S}}$ is given by the vertex set $\tilde \S$ and the edge set $\{(S,S')\mid S\cap S'\neq\emptyset\}$. It is easy to see that $\tilde{\S}$ admits a direct product theorem by applying Lemma~\ref{lem:clique}. Again, we emphasize that $|\tilde \S|=2n/k$ and yet admits a direct product theorem.

\noindent\textbf{Comparison with Dinur and Kaufman.}
One might wonder that if direct product testing results can be established on linear sized direct product domains using simple constructions such as the sliding window domain then, why did \cite{DK17} work so hard and use extremely heavy objects such as high dimensional expanders to obtain linear sized direct product domains. This is because for applications to gap and hardness amplification, it is desirable that a direct product domain also has distance amplification (defined below) and high dimensional expanders have distance amplification whereas the sliding window domain does not. 

\begin{definition}[Distance Amplification, \cite{DK17}]\label{def:dist}
A direct product domain $\D\subseteq \binom{[n]}{k}$ is said to have distance amplification  if for every $x,y\in\{0,1\}^n$ such that $\delta:=\Delta(x,y)<\nicefrac{1}{k}$, we have that $\Delta(\DP_V(x),\DP_V(y))=\Omega(k\delta)$. 
\end{definition}

Thus, the construction of the sliding window domain provides a conceptual clarification as to why we need high dimensional expanders: we can obtain direct product testing from simple constructions like the sliding window domain and we can obtain distance amplification from known constructions of vertex expanders (see Appendix~\ref{sec:distamp}); but to obtain both simultaneously, \cite{DK17} needed high dimensional expanders. 

\subsection*{Acknowledgments}
We are truly grateful to Irit Dinur for her constant support throughout this project and for her many illuminating and helpful discussions and comments. We also thank the anonymous reviewers for their detailed and useful feedback.

\bibliographystyle{alpha}
\bibliography{references}

\newcommand{\etalchar}[1]{$^{#1}$}
\begin{thebibliography}{DDG{\etalchar{+}}17}

\bibitem[AS97]{AS97}
Sanjeev Arora and Madhu Sudan.
\newblock Improved low-degree testing and its applications.
\newblock In {\em IN 29TH STOC}, pages 485--495, 1997.

\bibitem[BCIM18]{BCIM18}
Andries~E. Brouwer, Sebastian~M. Cioaba, Ferdinand Ihringer, and Matt McGinnis.
\newblock The smallest eigenvalues of hamming graphs, johnson graphs and other
  distance-regular graphs with classical parameters.
\newblock {\em J. Comb. Theory, Ser. {B}}, 133:88--121, 2018.

\bibitem[BDN17]{BDN17}
Amey Bhangale, Irit Dinur, and Inbal~Livni Navon.
\newblock Cube vs. cube low degree test.
\newblock In {\em 8th Innovations in Theoretical Computer Science Conference,
  {ITCS} 2017, January 9-11, 2017, Berkeley, CA, {USA}}, pages 40:1--40:31,
  2017.

\bibitem[DDG{\etalchar{+}}17]{DDGKS17}
Roee David, Irit Dinur, Elazar Goldenberg, Guy Kindler, and Igor Shinkar.
\newblock Direct sum testing.
\newblock {\em {SIAM} J. Comput.}, 46(4):1336--1369, 2017.

\bibitem[DG08]{DG08}
Irit Dinur and Elazar Goldenberg.
\newblock Locally testing direct product in the low error range.
\newblock In {\em 49th Annual {IEEE} Symposium on Foundations of Computer
  Science, {FOCS} 2008, October 25-28, 2008, Philadelphia, PA, {USA}}, pages
  613--622, 2008.

\bibitem[DHK{\etalchar{+}}18]{DHKLT19}
Irit Dinur, Prahladh Harsha, Tali Kaufman, Inbal~Livni Navon, and Amnon
  Ta{-}Shma.
\newblock List decoding with double samplers.
\newblock {\em CoRR}, abs/1808.00425, 2018.
\newblock To appear in SODA'19.

\bibitem[DK17]{DK17}
Irit Dinur and Tali Kaufman.
\newblock High dimensional expanders imply agreement expanders.
\newblock In {\em 58th {IEEE} Annual Symposium on Foundations of Computer
  Science, {FOCS} 2017, Berkeley, CA, USA, October 15-17, 2017}, pages
  974--985, 2017.

\bibitem[DN17]{DN17}
Irit Dinur and Inbal~Livni Navon.
\newblock Exponentially small soundness for the direct product z-test.
\newblock In {\em 32nd Computational Complexity Conference, {CCC} 2017, July
  6-9, 2017, Riga, Latvia}, pages 29:1--29:50, 2017.

\bibitem[DR06]{DR06}
Irit Dinur and Omer Reingold.
\newblock Assignment testers: Towards a combinatorial proof of the pcp theorem.
\newblock {\em SIAM J. Comput.}, 36(4):975--1024, December 2006.

\bibitem[DS14]{DS14}
Irit Dinur and David Steurer.
\newblock Direct product testing.
\newblock In {\em {IEEE} 29th Conference on Computational Complexity, {CCC}
  2014, Vancouver, BC, Canada, June 11-13, 2014}, pages 188--196, 2014.

\bibitem[GS00]{Gs97}
Oded Goldreich and Shmuel Safra.
\newblock A combinatorial consistency lemma with application to proving the
  {PCP} theorem.
\newblock {\em {SIAM} J. Comput.}, 29(4):1132--1154, 2000.

\bibitem[IKW12]{IKW12}
Russell Impagliazzo, Valentine Kabanets, and Avi Wigderson.
\newblock New direct-product testers and 2-query pcps.
\newblock {\em {SIAM} J. Comput.}, 41(6):1722--1768, 2012.

\bibitem[RS97]{RS97}
Ran Raz and Shmuel Safra.
\newblock A sub-constant error-probability low-degree test, and a sub-constant
  error-probability pcp characterization of np.
\newblock In {\em Proceedings of the Twenty-ninth Annual ACM Symposium on
  Theory of Computing}, STOC '97, pages 475--484, New York, NY, USA, 1997. ACM.

\end{thebibliography}

\appendix

\section{Missing Proofs}\label{sec:appendix} 

\begin{proof}[Proof of Claim~\ref{claim:expanderMixing}]
		We prove only the furthermore part. The first part follows by plugging $c=1$.
				 For a set $S\subseteq V$ we denote by $\vecOne{S}$ the characteristic vector of $S$.
	Let $p_{\mu}\in \mathbb{R}^n$ be the distribution vector that describes $\mu$. First observe that:		
	$$\Pr_{(u,v)} [v\in T\mid u\sim\mu]=\frac 1 d \cdot (p_\mu)^t \cdot A \cdot \vecOne{ T}, $$
	where the probability is taken over $u$ that is drawn according to $\mu$ and $v$ is a uniformly random neighbor of $u$. Note that $\vecOne{V}$ is an eigenvector of $A$ corresponding to the largest eigenvalue (in absolute value) of $d$.
		We decompose the vectors: $p_\mu, \vecOne{T}$ as follows:
	
	\[ p_\mu = \frac{1}{\abs{V}} \vecOne{V}+\vec{p} \text{ and } \vecOne{{T}}= \gamma \vecOne{V}+\vec{t}.\]
	
	 Note that  $\gamma=\frac{\abs{ T}}{\abs{V}}$ and  $\vec{p}, \vec {t}$ are both orthogonal to $\vecOne{V}$ and let $\beta=\frac{\abs{S}}{\abs{V}}$.
	In these notations:
	\begin{eqnarray*}
		\frac{1}{d}\cdot (p_\mu)^t\cdot A \cdot \vecOne{ T} & = & \frac 1 d\left(\frac{1}{\abs{V}} \vecOne{V}+\vec{p} \right)^t \cdot A \cdot  ( \gamma \vecOne{V}+\vec{t}) \\
		& = & \gamma  + \ip{\vec{p}A}{\vec{t}} \\
		& \le & \gamma  +\frac \lambda d \norm {\vec{p}} \cdot  \norm{\vec{t}},
	\end{eqnarray*}
		where in the last step we used the Cauchy-Schwarz inequality and the fact that $\norm {\vec{p}A}\le \lambda \norm{\vec p}$. Now since the value of each coordinate of $p_{\mu}$ is upper bounded by $\frac c {\abs{S}}$ we get: $\norm {\vec{p}}^2=\norm{p_\mu}^2 - \frac{1}{\abs{V}^2}\norm{\vecOne{V}}^2\le \frac{c}{\abs{S}}-\frac{1}{\abs{V}}$, and $\norm {\vec{t}}^2=(\gamma(1-\gamma))\abs{V}$. So:
	\begin{eqnarray*}
		\Pr_{(u,v)} [v\in T\mid u\sim\mu]& = &\frac 1 d\cdot (p_\mu)^t\cdot A \cdot \vecOne{ T}\\
		 & \le & \gamma+ \frac{\lambda}d \cdot \sqrt{\left(\frac{c}{\abs{S}}-\frac{1}{\abs{V}}\right)\gamma(1-\gamma)\abs{V}}\\
		& \le & 
		\gamma +\frac \lambda d \cdot \sqrt{\frac{c\gamma \abs{V}}{\abs{S}}} \\
		 & = & \frac{|T|}{|V|} + \frac{\lambda}{d}\cdot \sqrt{\frac{c|T|}{|S|}}
	\end{eqnarray*}
	\end{proof}
	
\section{Application of Theorem~\ref{thm:expanders}: $\Omega(n)$-slice of the Hypercube}\label{sec:apphalf}

In this 
section, we consider the test graph $J(n,k,k(0.5-2\varepsilon))$, where $\varepsilon$ is some small constant. The domain of the direct product encoding is $\nk$. The pair $(S,S')$ is connected by an edge if and only if: $\abs {S\cap S'}=k(1/2-2\varepsilon)$.
We show that:
\begin{lemma}\label{claim:JhonsonEigen}
	Let $\varepsilon\in (0,1/33)$. Let $n\in \mathbb{N}$, let $1/2 -\varepsilon\le c\le 1/2$ be a constant and let $k=c \cdot n$, then the graph $J(n,k,k\cdot(1/2-2\varepsilon))$ is $(6\varepsilon,1/2-2\varepsilon)$-coordinate expander for large enough  $n$.
\end{lemma}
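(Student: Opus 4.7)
The plan is to verify each of the three conditions in Definition~\ref{def:CE} for the graph $G = J(n,k,t)$ with $t = k(1/2-2\varepsilon)$ and $k = cn$, $c \in [1/2-\varepsilon, 1/2]$, taking $\lambda = 6\varepsilon$ and $\rho = 1/2-2\varepsilon$.

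The retention probability and the local structure are essentially direct computations. For any $S \ni i$, a uniformly random neighbor $S'$ of $S$ satisfies that $S' \cap S$ is a uniformly random $t$-subset of $S$, so by symmetry $\Pr[i \in S'] = t/k = 1/2-2\varepsilon = \rho$, matching the retention bound in Condition~2 exactly. Moreover, the induced subgraph $G_i$ is isomorphic to $J(n-1, k-1, t-1)$ (realize it on the universe $[n] \setminus \{i\}$ by removing $i$ from every set), so bounding $\lambda(G_i)$ reduces to bounding the spectral gap of a Johnson graph in the same asymptotic parameter regime.

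For Condition~1 and the spectral half of Condition~2, I plan to invoke the classical eigenvalue formula for the Johnson-scheme graph $J(N,K,T)$ in terms of Eberlein/dual Hahn polynomials and show that every non-leading eigenvalue, normalized by the degree $\binom{K}{T}\binom{N-K}{K-T}$, has absolute value strictly less than $6\varepsilon$ when $(N,K,T) = (n,k,t)$ or $(n-1,k-1,t-1)$ and $n$ is sufficiently large. For $K$ close to $N/2$ and $T$ close to $K/2$, the alternating sums in the Eberlein expression exhibit strong cancellations and the normalized eigenvalues decay rapidly in the eigenspace index; the bulk of the technical work is in extracting the concrete constant $6$ (and using $\varepsilon < 1/33$). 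This step is the main obstacle in the proof.

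Finally, for Condition~3, fix $S$ and $T \subseteq S$ with $|T| \ge 2/\rho$. Again using that $S' \cap S$ is a uniformly random $t$-subset of $S$, the variable $|S' \cap T|$ is hypergeometric with mean $t|T|/k = \rho|T|$. A Chernoff-type tail inequality for hypergeometric variables yields $\Pr[|S' \cap T| \le \rho|T|/2] \le \exp(-\Omega(\rho|T|))$, which is at most $6\varepsilon$ once $|T|$ is large enough (on the scale of $\log(1/\varepsilon)/\rho$); the small-$|T|$ cases can be addressed either by a direct enumeration of hypergeometric probabilities or absorbed into a mild parameter relaxation (cf.\ Remark~\ref{rem:1}). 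Taken together, these three verifications yield the required $(6\varepsilon,\,1/2-2\varepsilon)$-coordinate expansion for sufficiently large $n$.
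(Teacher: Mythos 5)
Your outline correctly handles the easy parts -- the retention probability $t/k = 1/2-2\varepsilon$, the isomorphism $G_i \cong J(n-1,k-1,t-1)$, and the identification of $|S'\cap T|$ as a hypergeometric variable with mean $\rho|T|$ -- but the two load-bearing steps are left open, and the routes you propose for filling them are either harder than necessary or simply do not close.

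For the spectral bound (Condition 1 and the eigenvalue part of Condition 2), the paper does not go through the full Eberlein/dual-Hahn eigenvalue expansion of the Johnson scheme at all. It cites a closed-form result (Theorem 3.10 of [BCIM18]) stating that the normalized second-largest-in-absolute-value eigenvalue of $J(n,k,t)$ equals $\frac{t}{k}-\frac{k-t}{n-k}$ whenever $(k-t)(n-1)\ge k(n-k)$; substituting $t=k(1/2-2\varepsilon)$, $k=cn$ with $c\in[1/2-\varepsilon,1/2]$, and a two-line estimate immediately gives $|\lambda|\le 6\varepsilon$, with the bound for $G_i$ following by monotonicity of $t/k$. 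Your plan, by contrast, proposes to control \emph{all} non-trivial eigenvalues from the Eberlein polynomial formula and explicitly concedes that "extracting the concrete constant $6$" is "the main obstacle." That is a genuine gap: the lemma is precisely about that constant, and the route you sketch requires significantly more work than the argument the paper actually uses.

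For Condition 3, the Chernoff heuristic $\exp(-\Omega(\rho|T|))$ is fine for $|T|$ growing with $n$, and you rightly observe that it only yields $\le 6\varepsilon$ once $|T|\gtrsim \rho^{-1}\log(1/\varepsilon)$, whereas the definition requires the bound already for $|T|\ge 2/\rho$ (i.e., starting around $|T|=5$). Your proposed fixes do not resolve this. Direct enumeration makes matters worse, not better: for constant $|T|=5$ and $k\to\infty$, $|S'\cap T|$ converges to $\mathrm{Binomial}(5,\rho)$, and $\Pr[|S'\cap T|\le 1]\approx (1-\rho)^5+5\rho(1-\rho)^4\approx 0.27$, which exceeds $6\varepsilon\le 6/33\approx 0.18$ for every admissible $\varepsilon$; and invoking Remark~\ref{rem:1} changes the target constants, not the probability being bounded. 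The paper's own treatment of the $|T|\le 4\varepsilon k$ regime is a direct combinatorial estimate (its "Case 1"), so your proposal is missing a concrete argument that the paper supplies; but you should also note that the Chernoff-style intuition you used in fact exposes a real tension in that regime, and a careful reader of your write-up would want to see the small-$|T|$ case settled explicitly rather than deferred.
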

\begin{proof}
	\begin{enumerate}
		\item The proof of the second largest eigenvalue in absolute value of $J(n,k,t)$ was recently confirmed in \cite{BCIM18} to be as follows:
\begin{theorem}[Theorem 3.10 in  \cite{BCIM18}]
The second largest (normalized) eigenvalue of $J(n,k,t)$ is:
$$
\frac{\binom{k-1}{t-1}\cdot\binom{n-k}{k-t}-\binom{k}{t}\cdot\binom{n-k-1}{k-t-1}}{\binom{k}{t}\cdot\binom{n-k}{k-t}}=\frac{t}{k}-\frac{k-t}{n-k},
$$
whenever $(k-t)(n-1)\ge k(n-k)$.
\end{theorem}

		 Note that  for the value of $n,k,t$ we are interested in we have, $$k(n-k)\le nk(1/2+\varepsilon)=n(k-t-\varepsilon k)\le (n-1)(k-t),$$ when $n\ge \frac{1}{2\varepsilon}+2$.
		Therefore we can apply the above theorem and obtain a bound on the second largest eigenvalue in absolute value: 
\begin{align*}
|\lambda|&=\left\lvert\frac{1}{2}-2\varepsilon - \frac{(1/2+2\varepsilon)}{\frac{1}{c}-1}\right\rvert\\
&\le \frac{1}{2}\cdot \left\lvert 1-\frac{1}{\frac{1}{c}-1}\right\rvert + 2\varepsilon\cdot \left\lvert 1+\frac{1}{\frac{1}{c}-1}\right\rvert \\
&\le \frac{1}{2}\cdot \left\lvert \frac{4\varepsilon}{1+2\varepsilon}\right\rvert + 4\varepsilon\le 6\varepsilon
\end{align*}
				
		\item Fix $i\in [n]$. Then the graph $G_i$ is isomorphic to $J(n-1,k-1,k\cdot(1/2-2\varepsilon)-1)$. Therefore by the first item $|\lambda(G_i)|<|\lambda|$ as $\frac{t-1}{k-1}<\frac{t}{k}$. Clearly, for every value of $i\in [n]$ and for each $S\in V_i$ the probability that $i\in S'$ equals $1/2-2\varepsilon$.
		
		\item Fix $S,T\subseteq S$. Let $|T|=\alpha$.  Suppose $k-\alpha\ge 2\cdot k(1/2-2\varepsilon)$, i.e., $\alpha\le 4\varepsilon k$.
		\begin{eqnarray*}
			\Pr [\abs{T\cap S'} \le \alpha/4] & \le  & \frac{\sum_{i=0}^{\alpha/4}\binom{\alpha}{i}\cdot \binom{k-\alpha}{k(1/2-2\varepsilon)-i}}{\binom{k}{k(1/2-2\varepsilon)}} \\
			&\le & \frac{\binom{\alpha}{\alpha/4}\sum_{i=0}^{\alpha/4} \binom{k-\alpha}{k(1/2-2\varepsilon)-i}}{\binom{k}{k(1/2-2\varepsilon)}} \\
			&\le &\frac{\alpha/4\cdot \binom{\alpha}{\alpha/4}\cdot  \binom{k-\alpha}{k(1/2-2\varepsilon)}}{\binom{k}{k(1/2-2\varepsilon)}} \\
			&\le & 1.01\alpha/4\cdot (2^{H(0.25)\alpha}\cdot e^{-k^2(1/2-2\varepsilon)/\alpha}) \\
						&\le & 1.01\alpha/4\cdot (2^{H(0.25)\alpha-1.4k}) =o(1),
		\end{eqnarray*}
		for large $n$. On the other hand, suppose $\alpha> 4\varepsilon k$. Then we have, 
	\begin{eqnarray*}
			\Pr [\abs{T\cap S'} \le \alpha/4] & \le  & \frac{\binom{\alpha}{\alpha/4}\sum_{i=0}^{\alpha/4} \binom{k-\alpha}{k(1/2-2\varepsilon)-i}}{\binom{k}{k(1/2-2\varepsilon)}} \\
			& \le  & \frac{\binom{\alpha}{\alpha/4}\cdot 2^{k-\alpha}}{\binom{k}{k(1/2-2\varepsilon)}} \\
			& \le  & 2^{-(1-H(1/4))\alpha+(1-H(1/2 -2\varepsilon))k+o(k)}\\
			 			& \le  & 2^{-(1-H(1/4))\alpha+((\log_2 e)\cdot 16\varepsilon^2)k+o(k)},
		\end{eqnarray*}
		where we used the inequality that $H(1/2 -2\varepsilon)\ge 1-(\log_2 e)\cdot 16\varepsilon^2$. Therefore it suffices to show that $(1-H(1/4))\frac{\alpha}{k} - (16\varepsilon^2\cdot (\log_2 e))$ is a positive constant  bounded away from 0. We estimate $\log_2 e\le 1.45$ and $H(1/4)\le 0.82$. Thus we have,
$$		(1-H(1/4))\frac{\alpha}{k} - (16\varepsilon^2\cdot (\log_2 e)) >0 \Leftrightarrow 
	\frac{\alpha}{k}>129\varepsilon^2,$$
	and this follows since $\alpha>4\varepsilon k>129\varepsilon^2 k$ whenever $\varepsilon<1/33$.\qedhere\end{enumerate}
\end{proof}

As a corollary we get a direct product theorem when the domain $\D$ equals $\nk$ for values of $k$ which are close to $n/2$ (by applying Theorem~\ref{thm:expanders} keeping in mind Remark~\ref{rem:1}). Recall that \cite{DS14} established this result for all $k\le n/2$.

\section{Simple Applications of Lemma~\ref{lem:clique}}\label{sec:simple}

In this subsection,  we consider two direct product domains, namely $\binom{[n]}{n/2}$ and $\binom{[n]}{2}$ and prove a direct product theorem for these domains when the test graph is a clique and a member of Johnson graph family respectively.

\subsection{$\nicefrac{n}{2}$ slice of the Hamming cube}\label{sec:johnhalf}
A natural two-query test on the $\nicefrac{n}{2}$ slice of the Hamming cube is as follows:

\begin{center}
\fbox{%
	\parbox{0.75\textwidth}{%
		\noindent\textbf{Test $\mathcal{T}$}:
		\par
		\noindent \textbf{Input}: A function $F:\binom{[n]}{n/2} \to\{0,1\}^{\nicefrac{n}{2}}$.
		\par
		\noindent \textbf{Procedure}: 
		 Pick uniformly and independently at random $S,S'\in\binom{[n]}{n/2}$. 
		\par		
		\noindent \textbf{Output}: Accept if and only if $F(S)|_{S\cap S'}=F(S')|_{S\cap S'}$.	
		}}
		\end{center}
		
We now interpret the above test in the language established in Section~\ref{sec:setting}. In the above test, the domain $V$ of the direct product is $\binom{[n]}{n/2}$ and the test graph $G$ is a clique with self loops. Therefore, for every $i\in[n]$, the induced subgraph of $V_i$ in $G$ is a clique (with self loops). And, for every $S\in V$ and every $i\in S$, the probability that a uniformly random neighbor $S'$ of $S$ in $G$ contains $i$ is $\nicefrac{1}{2}$. Thus, from Lemma~\ref{lem:clique} we have that for any $F\in\mathcal F_V$, if $F$ passes $\Test(G)$ with probability $1-\varepsilon$ then $F$ is $(1-4\varepsilon)$-close to $\d(F)$.

\subsection{$J(n,2,1)$ of the Johnson Graph Family}

For the domain $\binom{[n]}{2}$, we note that if we pick two elements from $\binom{[n]}{2}$ uniformly and independently at random then they have empty intersection with probability almost 1. Therefore, the same test as for the $\nicefrac{n}{2}$ slice of the Hamming cube does not work here. Nonetheless, there is still a natural two-query test for the domain $\binom{[n]}{2}$ described as follows:

\begin{center}
\fbox{%
	\parbox{0.6\textwidth}{%
		\noindent\textbf{Test $\mathcal{T}$}:
		\par
		\noindent \textbf{Input}: A function $F:\binom{[n]}{2} \to\{0,1\}^{2}$.
		\par
		\noindent \textbf{Procedure}: 
		 Pick uniformly at random $S\in\binom{[n]}{2}$. Then pick uniformly at random $S'\in \binom{[n]}{2}$ such that $|S\cap S'|=1$.
		\par		
		\noindent \textbf{Output}: Accept if and only if $F(S)|_{S\cap S'}=F(S')|_{S\cap S'}$.	
		}}\end{center}

		 		We now interpret the above test in the language established in Section~\ref{sec:setting}. In the above test, the domain $V$ of the direct product is $\binom{[n]}{2}$ and the test graph $G$ is $J(n,2,1)$. Note that for every $i\in[n]$, the induced subgraph of $V_i$ in $G$ is a clique (without self loops) because any two distinct subsets in $V_i$ have $i$ in their intersection and thus have intersection size equal to 1. Also for every $S\in V$ and every $i\in S$, the probability that a uniformly random neighbor $S'$ of $S$ in $G$ contains $i$ is $\nicefrac{1}{2}$. Thus, from Lemma~\ref{lem:clique} we have that for any $F\in\mathcal F_V$, if $F$ passes $\Test(G)$ with probability $1-\varepsilon$ then $F$ is $(1-4\varepsilon)$-close to $\d(F)$.

\section{Linear Sized Domains having Distance Amplification}\label{sec:distamp}

In this section, we show how to construct a collection of sets which have distance amplification. To do so we rely on the existence of vertex expanders. 

\begin{definition}[Vertex Expansion]
Let $G(V,E)$ be a $d$-regular graph. For every subset $S\subseteq V$ let $\partial(S)=\{u\in V\setminus S\mid \exists v\in S\text{ such that }(u,v)\in E\}$. The vertex isoperimetric constant $h(G)$ is defined as follows:
$$
h(G)=\min_{0\le |S|\le |V|/d} \frac{|\partial(S)|}{|S|\cdot d}.
$$
We say that $G$ is a vertex expander if $h(G)$ is a constant bounded away from 0.
\end{definition}

\begin{theorem}[Folklore]\label{thm:Vexp}
For all $d>2$, a random $d$-regular graph is a vertex expander with high probability.
\end{theorem}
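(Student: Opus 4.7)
The plan is to prove Theorem~\ref{thm:Vexp} via a standard first-moment (union bound) argument in the configuration model of Bollob\'as. I would first recall that sampling a uniformly random $d$-regular graph can be done by assigning $d$ half-edges to each of the $n$ vertices and picking a uniformly random perfect matching on the resulting $nd$ half-edges; an event occurring with probability $1-o(1)$ in the configuration model also occurs whp in the uniform random simple $d$-regular graph, since the probability of the matching yielding a simple graph is bounded below by a positive constant depending only on $d$. So it suffices to work inside the configuration model.

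Next, I would fix a small constant $h = h(d) > 0$ to be chosen at the end. For each $s$ with $1 \le s \le n/d$ and each fixed $S \subseteq V$ with $|S| = s$, let $\mathcal{B}(S)$ denote the event that $|\partial(S)| \le h s d$. If $\mathcal{B}(S)$ holds, then there exists $T \subseteq V \setminus S$ with $|T| \le h s d$ such that all $sd$ half-edges emanating from $S$ are matched to half-edges attached to vertices in $S \cup T$. A direct calculation in the pairing model (ordering the $sd$ half-edges of $S$ and exposing their partners one by one) gives
\[
\Pr[\mathcal{B}(S)] \;\le\; \binom{n-s}{hsd}\cdot \prod_{j=0}^{sd-1}\frac{(s+hsd)d - j}{nd - 2j - 1}
\;\le\; \binom{n-s}{hsd}\cdot \left(\frac{(1+hd)\,s}{n - 2s}\right)^{\!sd}.
\]

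I would then union-bound over the $\binom{n}{s}$ choices of $S$, giving
\[
\Pr[\exists\,S,\,|S|=s:\mathcal{B}(S)] \;\le\; \binom{n}{s}\binom{n-s}{hsd}\left(\frac{(1+hd)s}{n-2s}\right)^{sd}.
\]
Using $\binom{n}{s}\le (en/s)^s$ and $\binom{n-s}{hsd}\le (e/h\!d)^{hsd}\,(n/s)^{hsd}$, the bound takes the form $\bigl(C(h,d)\cdot (s/n)^{d-1-hd}\bigr)^{s}$ for some constant $C(h,d)$. Because $s \le n/d$, the base $(s/n)^{d-1-hd}$ is at most $(1/d)^{d-1-hd}$, which is strictly less than $1/(2C(h,d))$ once $d \ge 3$ and $h$ is chosen small enough (depending on $d$). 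Summing this geometric-in-$s$ tail from $s=1$ to $\lfloor n/d\rfloor$ then yields $o(1)$, so whp no bad $S$ exists and $h(G)\ge h$.

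The main obstacle is the entropy balance between the $\binom{n}{s}$ cost of choosing $S$ and the $((1+hd)s/n)^{sd}$ probability that its $sd$ half-edges all land in a small neighborhood: the former is roughly $(n/s)^{s}$ while the latter is roughly $(s/n)^{sd}$, leaving a net factor of $(s/n)^{s(d-1)}$, which must absorb the extra $(n/s)^{hsd}$ coming from picking the small neighborhood $T$. This leaves room to choose $h>0$ only when $d-1 > hd$, i.e., only when $d\ge 3$; this reflects the well-known fact that the theorem genuinely fails at $d=2$, where a random $2$-regular graph is a disjoint union of cycles and hence does not expand.
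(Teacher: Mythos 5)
The paper labels this theorem as ``Folklore'' and gives no proof, so there is no in-paper argument to compare against; I therefore evaluate your proposal on its own. Your overall plan --- pass to the configuration model, union bound over small sets $S$ and over candidate small boundaries $T$, then a geometric sum over $s$ --- is indeed the standard route. However, the probability estimate at the heart of the argument is off by a factor of two in the exponent, and that factor is exactly what separates $d=2$ from $d\ge 3$.

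You bound the probability that all $sd$ half-edges of $S$ pair within $S\cup T$ by a product with $sd$ factors,
$\prod_{j=0}^{sd-1}\frac{(s+hsd)d - j}{nd - 2j - 1}$.
This treats the $sd$ half-edges of $S$ as if each gives rise to its own independent exposure step. But exposing one pair consumes \emph{two} half-edges, and a pair internal to $S$ consumes two half-edges of $S$ in a single reveal. Since $|T|\le hsd$ supplies only $|T|d \le h d\cdot sd$ half-edges outside $S$, and for small $h$ this is well below $sd$, most pairs must be internal to $S$; the number of exposed pairs is only about $sd/2$, not $sd$. A direct count of matchings (via double factorials, using $(2m-1)!!\approx(2m/e)^m$) confirms that for fixed $S,T$ the probability is of order $\bigl((1+hd)s/n\bigr)^{sd/2}$, i.e.\ the exponent is $sd/2$, not $sd$.

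This is not a cosmetic discrepancy. With your exponent $sd$, the entropy balance gives the condition $d-1-hd>0$, which you then assert ``holds only when $d\ge 3$.'' That is false as stated: $d-1>hd$ is satisfied by $d=2$ and any $h<1/2$. Yet random $2$-regular graphs are disjoint unions of cycles and certainly fail to be vertex expanders, so your bound, taken at face value, would ``prove'' something false. With the corrected exponent $sd/2$, the entropy balance becomes $d/2-1-hd>0$, i.e.\ $d(1/2-h)>1$. For $d=2$ this forces $h<0$, correctly ruling it out; for any integer $d\ge3$ one can take any $h<1/2-1/d$. Once this repair is made, the remaining steps --- the configuration-model reduction, the union over $T$ (which should really be a sum over $|T|\le hsd$, but this only costs a polynomial factor), and the geometric summation over $1\le s\le n/d$ --- go through and give the theorem for all $d\ge3$.
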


Given a $d$-regular graph $G(V,E)$ (where $n:=|V|$) which is a vertex expander with vertex isoperimetric constant $\gamma>0$, we show how to construct $\S_G\subseteq \binom{[n]}{d}$ of cardinality $n$ such that $\S_G$ has distance amplification. We identify the vertices in $V$ with $[n]$ and construct $\S_G$ as follows:
$$
\S_G=\{\partial(\{v\})\mid v\in V\}.
$$

\begin{claim}
$\S_G$ has distance amplification.
\end{claim}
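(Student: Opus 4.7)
The plan is to translate the distance amplification condition directly into a vertex expansion statement about $G$. Given any $x,y \in \{0,1\}^n$ with $\delta := \Delta(x,y) < 1/d$, I would let $D := \{i \in [n] : x_i \neq y_i\}$, so that $|D| = \delta n < n/d$. The key observation is that for each $v \in V$, the codeword coordinate indexed by $\partial(\{v\})$ satisfies $\DP_{\S_G}(x)(\partial(\{v\})) \neq \DP_{\S_G}(y)(\partial(\{v\}))$ if and only if $\partial(\{v\}) \cap D \neq \emptyset$, i.e., $v$ has some neighbor (in $G$) that lies in $D$. Thus the problem reduces to lower-bounding the size of the set
\[
B := \{v \in V : N_G(v) \cap D \neq \emptyset\}.
\]

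Next, I would note the inclusion $\partial(D) \subseteq B$: by definition every vertex $u \in \partial(D)$ lies outside $D$ and has at least one neighbor inside $D$, and such a $u$ therefore belongs to $B$. Since $|D| \le n/d$, the vertex isoperimetric inequality for $G$ applies with constant $h(G) = \gamma > 0$, yielding $|\partial(D)| \ge \gamma \cdot d \cdot |D|$. Combining the two gives $|B| \ge \gamma \, d \, |D| = \gamma \, d \, \delta \, n$.

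Finally, since $|\S_G| = n$, I would conclude
\[
\Delta(\DP_{\S_G}(x), \DP_{\S_G}(y)) \;=\; \frac{|B|}{|\S_G|} \;\ge\; \gamma \, d \, \delta \;=\; \Omega(d\, \delta),
\]
which matches Definition~\ref{def:dist} with $k = d$. There is no serious technical obstacle in this argument: the only subtlety is checking that the vertex-isoperimetric regime $|D| \le n/d$ is exactly what the hypothesis $\delta < 1/k$ (with $k = d$) guarantees, so the vertex expansion can be invoked without loss. The claim is essentially a direct repackaging of vertex expansion in the language of direct product codes, which is the reason why vertex expanders alone suffice for distance amplification while the stronger property of direct product testability—highlighted earlier in the paper via the sliding window construction—lives in a genuinely different regime.
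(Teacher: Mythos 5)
Your proof is correct and takes the same route as the paper: both identify the set of differing coordinates $D$ (the paper calls it $R$), observe that a codeword coordinate $\partial(\{v\})$ disagrees precisely when $v$ has a neighbor in $D$, and lower bound that count via the vertex-expansion guarantee $|\partial(D)| \ge \gamma d |D|$. Your version is slightly more explicit in spelling out the inclusion $\partial(D) \subseteq B$, but there is no substantive difference in the argument.
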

\begin{proof}
Fix distinct $x,y\in\{0,1\}^n$. Let $\delta:=\Delta(x,y)\le \nicefrac{1}{d}$. Let $R\subseteq [n]$ be the set of coordinates on which $x$ and $y$ differ. Clearly, $|R|\le n/d$. The number of subsets in $\S_G$ that contain an element in $R$ is at least $\gamma d |R|$. Therefore we have $\Delta(\DP_{\S_G}(x),\DP_{\S_G}(y))\ge \gamma \delta d$.
\end{proof}

\end{document}